\definecolor{grey}{rgb}{0.75,0.75,0.75}
\newtheorem{defi}{Definition}
\newtheorem{prop}{Proposition}
\newtheorem{theorem}{Theorem}
\DeclareMathOperator{\ID}{id}
\DeclareMathOperator{\Des}{\mathcal{D}}
\DeclareMathOperator{\Aug}{{\rm Aug}}
\author[ligm]{Gregory Kucherov}
\ead{Gregory.Kucherov@univ-mlv.fr}
\author[ligm,ELTE]{Lilla T\'othm\'er\'esz\corref{cor}}
\ead{tothmereszlilla@gmail.com}
\author[ligm]{St\'ephane Vialette}
\ead{vialette@univ-mlv.fr}
\address[ligm]{Universit\'e Paris-Est \& CNRS, Laboratoire d'Informatique Gaspard Monge, Marne-la-Vall\'ee, France}
\address[ELTE]{Lor\'and E\"otv\"os University, P\'azm\'any P\'eter s\'et\'any 1/C, H-1117 Budapest, Hungary}
\title{On the combinatorics of suffix arrays}
\begin{document}

\begin{abstract}
We prove several combinatorial properties of suffix arrays, including
a characterization of suffix arrays through a bijection with a certain
well-defined class of permutations. Our approach is based on the
characterization of Burrows-Wheeler arrays given in \cite{Cro05}, that
we apply by reducing suffix sorting to cyclic shift sorting through
the use of an additional sentinel symbol. 
We show that the
characterization of suffix arrays for a special case of binary
alphabet given in \cite{He2005} easily follows from our
characterization. Based on our results, we also provide simple
proofs for the enumeration results for suffix arrays, obtained in
\cite{Schurmann2008}. Our approach to characterizing suffix arrays is
the first that exploits their
relationship
with Burrows-Wheeler permutations.


\end{abstract}
\begin{keyword}
combinatorics, permutations, suffix array, Burrows-Wheeler transform
\end{keyword}
\maketitle

\section{Introduction}

Suffix array is a very popular data structure in string algorithms,
both in theoretical studies and practical applications,
that has been designed as a space-efficient alternative to suffix
trees
\cite{DBLP:journals/siamcomp/ManberM93,DBLP:books/ph/frakesB92/GonnetBS92}.
With the discovery of linear-time construction algorithms
for suffix arrays
\cite{DBLP:conf/icalp/KarkkainenS03,DBLP:conf/cpm/KimSPP03,DBLP:conf/cpm/KoA03},
this data structure received an increasing attention during the last decade. A good deal of work
has been devoted to improving the practical efficiency of suffix
arrays.

A suffix array for a string of length $n$ is essentially a
permutation of length $n$ corresponding to the starting positions of
all suffixes sorted lexicographically. Obviously, if the alphabet has
a fixed size $k<n$,
then only a proper subset (at most $k^n$) of all $n!$ permutations are suffix arrays
for some word over this alphabet.
A main motivation of this paper is to provide a characterization for
suffix array permutations for bounded-size alphabets.

Our results take advantage of a very close relation between suffix arrays
and the {\em Burrows-Wheeler transform} \cite{BW94}. The
Burrows-Wheeler transform of a string is a permutation of string
letters which allows the string to be effectively reconstructed. Among other applications, it is the basis for many {\em
  compact text indexes} that have been intensively studied and used in
practical applications (see e.g. \cite{DBLP:journals/csur/NavarroM07}
and references therein).

Crochemore et al. \cite{Cro05} pointed out a
very nice characterization of {\em Burrows-Wheeler arrays}, which are
close relatives of suffix arrays. This characterization, attributed to
Gessel and Reutenauer \cite{DBLP:journals/jct/GesselR93}, uses the
key notion of {\em linking permutation} which is similar to (but
different from) the notion of $\Psi$-function studied for suffix
arrays \cite{Grossi}. We show that the approach of \cite{Cro05} can be
successfully applied to obtain characterization results for suffix
arrays, using a relation between orderings of suffixes and cyclic shifts.

As far as related works are concerned, He at al. \cite{He2005}
provided a characterization of suffix arrays
for the case of binary alphabet ($k=2$) and an assumption
that the terminal sentinel symbol is ranked between the two main
symbols in the alphabet ordering.
We show that the characterization of
\cite{He2005} easily follows from the characterization that we
propose in this paper.

In \cite{Schurmann2008}, Sch\"urmann and Stoye prove several counting
results for suffix arrays and corresponding strings. They
use a characterization of suffix array permutations through
$\Psi$-functions, that they call $R_+$-arrays, which are mappings from
$[1,n]$ to $[0,n]$. A crucial parameter in countings is the number of
{\em descents} in $R_+$-arrays, directly related to the minimal
alphabet size on which the corresponding suffix array can be realized (see also \cite{DBLP:conf/mfcs/BannaiIST03}).
Compared to our approach,
the important difference is that the set of $R_+$-arrays is not
characterized, while the set of linking permutations admit a
neat combinatorial characterization as permutations with only one
orbit. This allows us to provide a bijection between suffix arrays and
a certain well-defined class of permutations. To demonstrate the
usefulness of our approach, we obtain
much simpler proofs of counting theorems from \cite{Schurmann2008}.
Our approach to characterizing suffix arrays is
the first that exploits their
relationship
with Burrows-Wheeler permutations.

\section{Preliminaries}
\label{prelim-sect}

In what follows, $\Sigma=\{a_1,a_2,\ldots,a_k\}$ is an ordered
alphabet of size $k$, where $a_1<a_2<\dots <a_k$.
The set of all permutations
$\pi= \pi(1)\, \pi(2) \, \dots\, \pi(n)$ of length $n$ is denoted by
$\mathbf{S}_n$, and $\ID$ denotes the identity permutation
$1\, 2\, \ldots\, n$.
The composition of permutations $\sigma$ and $\pi$ is
denoted $\pi\sigma$, \emph{i.e.},
$(\pi\sigma)(i)=\pi(\sigma(i))$.
Throughout the paper, we assume that the addition (subtraction) of a constant value
to a permutation value verifies the identities $n+1\equiv 1$,
$1-1\equiv n$.
In other words, $\pi(i)+k = ((\pi(i)-1+k)\mod n)+1$.
For a permutation $\pi\in\mathbf{S}_n$, we let $\pi+k$, $k \in [1,n]$, stand for the
permutation defined by
$(\pi+k)(i)=\pi(i)+k$.
Note that $\pi+k=(\ID+k)\pi$. The permutation $(\pi-k)$ is defined similarly.

\begin{defi}[suffix array]
 Given a word $w=w_1w_2\dots w_n$
on alphabet $\Sigma$,
 its {\em suffix array} is a permutation $\pi$
such that
$\pi(i)=j$ iff the suffix $w_j
\dots w_n$ is the $i^{th}$ in the lexicographic ordering of all suffixes
 of $w$.
\end{defi}

For example, the suffix array of $babba$ is $5\,2\,4\,1\,3$.

\begin{defi}[primitive word]
A word $u \in \Sigma^+$ is called \emph{primitive} if it is not a proper
power of another word,
\emph{i.e.},
$u = v^n$, $v \in \Sigma^+$, implies $n = 1$.
\end{defi}

Primitive words are exactly those words whose cyclic shifts are all
distinct. Therefore, for a primitive word, we can consider the
permutation defined by the lexicographic ordering of its cyclic
shifts. We call this permutation the \emph{Burrows-Wheeler array}
because of its direct relation to the Burrows-Wheeler
transform~\cite{BW94}.

\begin{defi}[BW-array]
Given a primitive word $w=w_1\dots w_n$,
its {\em Burrows-Wheeler array} (hereafter {\em BW-array}) is a
permutation $\pi$
such that
$\pi(i)=j$ iff the word $w_j \dots w_nw_1\dots w_{j-1}$
is the $i^{th}$ in the lexicographic ordering of all cyclic shifts of
$w$.
\end{defi}

For example, the BW-array of $bbaba$ is $3\, 5\, 2\, 4\, 1$.

We write $\mathbf{S}_n^c$ for the set of all permutations of
$\mathbf{S}_n$ with one orbit.
The following notion has proved to be very helpful for
characterizing BW-arrays \cite{DBLP:journals/jct/GesselR93,Cro05}. It
is related to $\Psi$-functions  \cite{Grossi} or $R_+$-arrays
\cite{Schurmann2008} defined on suffix arrays, but defines a mapping
on permutations.

\begin{defi}[linking permutation, linking mapping]
Let $\pi \in \mathbf{S}_n$.
The \emph{linking permutation} of $\pi$ is
the permutation $\varphi=\pi^{-1}(\pi+1) \in \mathbf{S}_n^c$.
The mapping $\pi \mapsto \varphi$ is called the \emph{linking mapping},
and is denoted by $\Phi$.
\end{defi}

As an example, the linking permutation of $5\,2\,4\,1\,3$ is
$4\,5\,1\,2\,3$.
Observe that $\varphi \in \mathbf{S}_n^c$ follows from
$\varphi(\pi^{-1}(i))=\pi^{-1}(i+1)$, where $n+1 \equiv 1$.
The linking permutation of a BW-array gives the ranks of the
consecutive shifts in the lexicographic order.
Furthermore,
note that if $\pi(1)$ and $\varphi = \Phi(\pi)$ are known, then one can
reconstruct $\pi$ by iterating $\pi(\varphi(i)) = \pi(i)+1$ starting
with $i=1$.

\begin{defi}
For two permutations $\pi$ and $\sigma$ of $\mathbf{S}_n$,
define $\pi \sim \sigma$ if and
only if there exists $k \in [1,n]$ such that $\sigma = \pi+k$.
\end{defi}

It is easily seen that $\sim$ is an equivalence relation on $\mathbf{S}_n$.

\begin{prop} \label{phi_jell}
$\Phi$ is well-defined on ${\mathbf{S}_n}/\!\sim$ and bijective from
${\mathbf{S}_n}/\!\sim$ to $\mathbf{S}_n^c$.
\end{prop}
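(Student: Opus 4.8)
The plan is to reduce everything to the single identity $\Phi(\pi)=\pi^{-1}(\pi+1)=\pi^{-1}(\ID+1)\pi$, which exhibits the linking permutation as the conjugate by $\pi$ of the cyclic shift $c=\ID+1=(1\,2\,\cdots\,n)$. Since $c$ is an $n$-cycle and conjugation preserves cycle type, this reconfirms that $\Phi(\pi)\in\mathbf{S}_n^c$, so $\Phi$ does land in the claimed target. It remains to check that $\Phi$ descends to $\mathbf{S}_n/\!\sim$, and that the induced map is injective and surjective.

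For well-definedness, I would use the relation $\pi+k=(\ID+k)\pi$ recorded above. Writing $\Phi(\pi+k)=(\pi+k)^{-1}(\ID+1)(\pi+k)=\pi^{-1}(\ID+k)^{-1}(\ID+1)(\ID+k)\pi$ and observing that $\ID+k$ and $\ID+1$ are both powers of the cyclic shift $c$ and hence commute, the factors $(\ID+k)^{-1}$ and $(\ID+k)$ cancel, giving $\Phi(\pi+k)=\pi^{-1}(\ID+1)\pi=\Phi(\pi)$. Thus $\Phi$ is constant on each $\sim$-class and induces a map $\bar\Phi\colon\mathbf{S}_n/\!\sim\ \to\mathbf{S}_n^c$.

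The core of the argument is the reconstruction recurrence $\pi(\varphi(i))=\pi(i)+1$ already noted for $\varphi=\Phi(\pi)$. Because $\varphi\in\mathbf{S}_n^c$ has a single orbit of length $n$, the positions $1,\varphi(1),\varphi^2(1),\dots,\varphi^{n-1}(1)$ are all distinct and exhaust $[1,n]$; hence the recurrence determines $\pi$ completely from the single value $\pi(1)$. For injectivity, suppose $\Phi(\pi)=\Phi(\sigma)=\varphi$ and set $k=\sigma(1)-\pi(1)$. The permutation $\pi+k$ satisfies the same recurrence with the same initial value as $\sigma$ (using $\Phi(\pi+k)=\varphi$ from the previous paragraph), so $\sigma=\pi+k$ by uniqueness, i.e. $\pi\sim\sigma$. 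For surjectivity, given any $\varphi\in\mathbf{S}_n^c$ I would define $\pi$ along the orbit by $\pi(\varphi^{\,j}(1))=1+j$ for $j=0,\dots,n-1$; this assigns the $n$ distinct values $1,\dots,n$ to the $n$ distinct positions, so $\pi$ is a genuine permutation, and a direct check gives $\pi(\varphi(i))=\pi(i)+1$ for every $i$, whence $\Phi(\pi)=\varphi$.

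The step I expect to require the most care is the surjectivity construction, specifically the consistency of the recurrence at the wrap-around step $i=\varphi^{\,n-1}(1)$: there $\varphi(i)=1$ and one must verify $\pi(1)=\pi(i)+1$ modulo the cyclic convention $n+1\equiv1$. This is exactly where the one-orbit hypothesis is indispensable --- it guarantees both that the orbit closes up after precisely $n$ steps and that it covers every position, so that the recurrence is simultaneously consistent and exhaustive. Once this is secured, both directions of the bijection follow, and the remaining verifications are routine.
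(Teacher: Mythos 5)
Your proof is correct, but it completes the bijectivity argument by a different route than the paper. For well-definedness the two arguments are the same algebra in different clothing: the paper manipulates $\pi^{-1}(\ID-k)((\pi+k)+1)$ directly, while you recast $\Phi(\pi)=\pi^{-1}(\ID+1)\pi$ as conjugation of the cycle $c=\ID+1$ and let commutativity of powers of $c$ do the cancellation --- slightly cleaner, and it gives membership in $\mathbf{S}_n^c$ for free by preservation of cycle type. The real divergence is after injectivity (which both of you get from the reconstruction recurrence $\pi(\varphi(i))=\pi(i)+1$ plus the initial value $\pi(1)$): the paper finishes by counting, observing that the number of $\sim$-classes is $(n-1)!=|\mathbf{S}_n^c|$, so an injection between finite sets of equal size is a bijection; you instead prove surjectivity constructively, defining $\pi(\varphi^{\,j}(1))=1+j$ along the single orbit and verifying the wrap-around step. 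The counting argument is shorter but purely existential (and silently uses that every $\sim$-class has exactly $n$ elements); your construction is a bit longer but yields the explicit inverse of $\Phi$, which is precisely the formula $\pi^{-1}(i)=\varphi^{\,i}(1)-1$ that the paper needs later in Theorem~\ref{th:bijekcio_fivel} --- so your route front-loads work the paper ends up doing anyway.
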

\begin{proof}
We first show that $\Phi(\pi)=\Phi(\sigma)$ if $\pi\sim\sigma$.
Let $\sigma=\pi+k$.
Then, $\Phi(\pi) = \pi^{-1}(\pi+1) = \pi^{-1}(((\pi+k)+1)-k)=
\pi^{-1}(\ID-k)((\pi+k)+1)$.
Observe now that
$(\pi+k)^{-1}=\pi^{-1}(\ID-k)$. Therefore,
$\Phi(\pi)=(\pi+k)^{-1}((\pi+k)+1)=\Phi(\pi+k)=\Phi(\sigma)$.
This shows that $\Phi$ is well-defined on ${\mathbf{S}_n}/\!\sim$.

Second, we can uniquely determine $\pi$ from $\Phi(\pi)$ and $\pi(1)$,
and hence $\Phi$ is injective on ${\mathbf{S}_n}/\!\sim$.
The number of $\sim$-equivalence classes is $(n-1)! =
|\mathbf{S}_n^c|$.
Therefore, the mapping is a bijection.
\end{proof}

\begin{defi}[permutation descent]
Let $\pi\in\mathbf{S}_n$. We say that $i \in [1,n-1]$ is a
\emph{descent} of $\pi$ if and only if $\pi(i)>\pi(i+1)$.
The set of all descents of $\pi$ is denoted
$\Des(\pi)$.
\end{defi}

The following theorem from~\cite{Cro05} provides a nice
characterization of BW-arrays through the linking mapping.
It will play a central role in our study.

\begin{theorem}[\cite{Cro05}] \label{BW_jell_erosebb}
Let $r_i \geq 0$, $1\leq i\leq k$, be integers such that
$\sum^k_{i=1}r_i=n$.
A permutation $\pi \in \mathbf{S}_n$ is the
BW-array of a primitive word $w \in \Sigma^n$
with $r_i$ occurrences of letter $a_i$, $1\leq i\leq k$,
if and only if
$\Des(\Phi(\pi))\subseteq\{r_1\,,r_1+r_2\,,\dotsc,\, r_1+\dots
+r_{k-1} \}$.
Moreover, in this case $\pi$ is the BW-array of exactly one such word.
\end{theorem}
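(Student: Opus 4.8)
The plan is to reduce the lexicographic comparison of cyclic shifts to the iteration of the linking permutation $\varphi=\Phi(\pi)$, and to exploit that $\varphi\in\mathbf{S}_n^c$ by Proposition~\ref{phi_jell}. Write $F$ for the non-decreasing word $a_1^{r_1}a_2^{r_2}\cdots a_k^{r_k}$; its positions of increase are exactly $\{r_1,\,r_1+r_2,\dots,\,r_1+\dots+r_{k-1}\}$, the candidate descent set. The key identity I would record first is that, for a BW-array, the $(k{+}1)$-st letter of the shift of rank $i$ equals $F[\varphi^k(i)]$: that shift starts at position $\pi(i)$, and $\varphi^k(i)=\pi^{-1}(\pi(i)+k)$ by iterating the definition $\varphi=\pi^{-1}(\pi+1)$. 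Thus comparing two shifts amounts to comparing the sequences $\bigl(F[\varphi^k(i)]\bigr)_k$ read along the orbit of $\varphi$.

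For the forward direction I would prove the core monotonicity statement: if ranks $i,i+1$ share their first letter (lie in one block of $F$), then $\varphi(i)<\varphi(i+1)$. Writing the two shifts as $c\,t_i$ and $c\,t_{i+1}$ with common first letter $c$, distinctness and sortedness give $t_i<t_{i+1}$, and appending $c$ to form the forward-rotations preserves the strict inequality; hence $\varphi(i)<\varphi(i+1)$. Consequently $\varphi$ has no descent inside a block, i.e. $\Des(\Phi(\pi))\subseteq\{r_1,\dots,r_1+\dots+r_{k-1}\}$.

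For the converse and uniqueness I would, given $\pi$ with the descent condition, define $w$ by $w_{\pi(i)}=F[i]$, so that $w$ has exactly $r_i$ occurrences of $a_i$, and then show that $\pi$ really sorts the cyclic shifts of $w$. Concretely I would prove $\widehat S_i<\widehat S_{i'}$ for every $i<i'$, where $\widehat S_i=\bigl(F[\varphi^k(i)]\bigr)_{k=0}^{n-1}$ is the shift starting at $\pi(i)$. If $F[i]<F[i']$ this is immediate; if $F[i]=F[i']$, then the descent condition (monotonicity within blocks) yields $\varphi(i)<\varphi(i')$ and reduces the comparison to $\widehat S_{\varphi(i)}$ versus $\widehat S_{\varphi(i')}$, again an ordered pair. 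Iterating replaces $(i,i')$ by $(\varphi^m(i),\varphi^m(i'))$, keeping $\varphi^m(i)<\varphi^m(i')$ as long as the leading letters agree. The hard part is guaranteeing this reduction terminates with a strict comparison, and this is exactly where $\varphi\in\mathbf{S}_n^c$ enters: since $\varphi$ is a single $n$-cycle, $\varphi^m(i)$ visits every index, and at the step where $\varphi^m(i)=n$ the inequality $\varphi^m(i)<\varphi^m(i')$ becomes impossible, so the leading letters must have differed earlier. This forces a strict comparison, simultaneously giving that the $\widehat S_i$ are pairwise distinct (so $w$ is primitive) and that $\widehat S_1<\dots<\widehat S_n$, i.e. $\pi$ is the BW-array of $w$. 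Uniqueness is then immediate: any primitive word with BW-array $\pi$ and these letter counts has its rank-$i$ shift beginning with $F[i]$, hence letter $F[i]$ at position $\pi(i)$, so it must equal $w$.
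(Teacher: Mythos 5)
There is an important mismatch of context here: the paper does not prove this statement at all. Theorem~\ref{BW_jell_erosebb} is imported verbatim from \cite{Cro05} (where it is attributed to Gessel and Reutenauer \cite{DBLP:journals/jct/GesselR93}) and is used as a black box, so there is no in-paper proof to compare your argument against. Your blind proof, however, is correct and self-contained, and it is essentially the standard argument behind the cited result. Your key identity is right: writing $F=a_1^{r_1}\cdots a_k^{r_k}$, the first letters of the sorted rotations of $w$ form $F$, and since $\pi\varphi^m=\pi+m$ for $\varphi=\Phi(\pi)$, the $(m{+}1)$-st letter of the rank-$i$ rotation is $F[\varphi^m(i)]$. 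The forward direction (equal first letters force $\varphi(i)<\varphi(i+1)$, by stripping the common letter and re-appending it) is sound because the two rotations are distinct words of equal length, so the strict comparison survives the rotation. The converse is the delicate part and you handle it correctly: defining $w_{\pi(i)}=F[i]$, the descent hypothesis makes $\varphi$ strictly increasing on each letter-block of $F$, so the pair $(\varphi^m(i),\varphi^m(i'))$ stays ordered as long as leading letters agree, and the fact that $\varphi\in\mathbf{S}_{n}^c$ (guaranteed for every linking permutation, cf.\ Proposition~\ref{phi_jell}) forces $\varphi^m(i)$ to reach $n$ within $n-1$ steps, at which point agreement becomes impossible; this simultaneously yields strict sortedness, distinctness of all rotations (hence primitivity), and, with $w_{\pi(i)}=F[i]$ forced for any candidate word, uniqueness. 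One point you should make explicit rather than implicit: when some $r_j=0$, consecutive prefix sums coincide, and you need that the positions $j\in[1,n-1]$ with $F[j]\neq F[j+1]$ are exactly the allowed descent positions lying in $[1,n-1]$ (this is a short check), so that ``descents only in the allowed set'' really is equivalent to ``$\varphi$ increasing on each block''; with that observation spelled out, the proof is complete.
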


\section{Characterization of suffix arrays}
\label{sect-charact}
In this section, we state our characterization theorems for suffix arrays:
Theorems~\ref{suff_jell_erosebb_vegejel_nelkul},
\ref{theorem_without_endline_symbol} and
~\ref{th:bijekcio_fivel}. We use a reduction of suffix sorting to
cyclic shift sorting by appending a sentinel symbol to the end of the word, and thereby reduce the characterization of suffix arrays to the characterization of BW-arrays.


Consider a symbol $\sharp\notin \Sigma$, and the alphabet
$\Sigma'=\{\sharp,a_1,a_2,\dots,a_k\}$ with $\sharp<a_1<a_2\dots
<a_k$. We will examine the suffix arrays of words $w\sharp$ for
$w\in\Sigma^n$. The following proposition is obvious.

\begin{prop} \label{claim:vegeszimb_ekvivalencia}
There is a one-to-one correspondence between the suffix arrays of
$w\in\Sigma^n$ and the suffix arrays of $w'\in\Sigma^n\sharp$. If
$\sigma\in \mathbf{S}_n$ is the suffix array of $w$, then $\pi\in
\mathbf{S}_{n+1}$ is the suffix array of $w\sharp$ if and only if
$\pi=(n+1)\,\sigma(1)\,\sigma(2)\dots\sigma(n)$.
\end{prop}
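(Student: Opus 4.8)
The plan is to analyze how appending the sentinel $\sharp$ to $w$ affects the lexicographic ordering of suffixes. The key observation is that the suffixes of $w\sharp = w_1\dots w_n\sharp$ are exactly the strings $w_j\dots w_n\sharp$ for $j \in [1,n]$, together with the single-character suffix $\sharp$ at position $n+1$. I would first argue that the suffix consisting solely of $\sharp$ is lexicographically smallest among all suffixes of $w\sharp$: since $\sharp < a_i$ for every $i$, and $\sharp$ is a prefix of no other suffix (each other suffix begins with some $a_i > \sharp$), it comes first. This explains why $\pi(1) = n+1$, the starting position of the $\sharp$-suffix.

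Next I would show that the relative order of the remaining suffixes $w_j\dots w_n\sharp$ (for $j \in [1,n]$) is identical to the relative order of the suffixes $w_j\dots w_n$ of $w$. The hard part conceptually is verifying that appending $\sharp$ does not disturb the ordering. This follows because $\sharp$ is smaller than every symbol of $\Sigma$ and acts as a tie-breaker only when one suffix $w_i\dots w_n$ is a proper prefix of another $w_j\dots w_n$. But in the original suffix array of $w$, when $w_i\dots w_n$ is a proper prefix of $w_j\dots w_n$, the shorter suffix is already ranked first lexicographically; appending $\sharp$ to both preserves this, since after the shared prefix the first suffix hits $\sharp$ (smallest) while the second continues with a symbol of $\Sigma$ (larger). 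Thus the comparison outcome is unchanged, and in all other cases the two suffixes already differ at some position within $w$, where the presence of the trailing $\sharp$ is irrelevant.

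Combining these two facts gives the stated formula directly. If $\sigma \in \mathbf{S}_n$ is the suffix array of $w$, meaning the suffix starting at $\sigma(i)$ is the $i^{th}$ smallest among suffixes of $w$, then among the suffixes of $w\sharp$ the $\sharp$-suffix at position $n+1$ is smallest, and the remaining suffixes appear in the same relative order as in $w$. Hence the $i^{th}$ smallest suffix of $w\sharp$ for $i \geq 2$ starts at position $\sigma(i-1)$, while the smallest starts at $n+1$. This is precisely the statement $\pi = (n+1)\,\sigma(1)\,\sigma(2)\dots\sigma(n)$. The one-to-one correspondence is then immediate, since the map $\sigma \mapsto \pi$ prepending $n+1$ and shifting is clearly invertible: from any suffix array $\pi$ of a word in $\Sigma^n\sharp$ one recovers $\sigma$ by deleting the leading entry $n+1$.

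I expect the only genuine subtlety to be the prefix-tie-breaking argument in the second step, namely making precise that $\sharp$ changes the ordering only in the prefix case and that in that case the change is consistent with the pre-existing order. Everything else is bookkeeping, and the bijective claim is a formal consequence of the explicit formula.
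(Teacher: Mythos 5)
Your proof is correct: the paper itself offers no argument for this proposition (it simply declares it obvious), and your three steps --- the $\sharp$-suffix is smallest, appending $\sharp$ preserves the relative order of the other suffixes (with the proper-prefix case as the only point needing care), and the resulting formula is invertible --- constitute exactly the routine argument the paper implicitly relies on. Note also that your tie-breaking step works because, when $w_i\dots w_n$ is a proper prefix of $w_j\dots w_n$, the longer suffix continues with a letter of $\Sigma$ at the position where the shorter one hits $\sharp$, which is precisely why the ordering is undisturbed; you identified and handled this correctly.
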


The following proposition shows, that for words in $\Sigma^n\sharp$, cyclic shift
sorting is equivalent to suffix sorting.
Note that this property remains true even if we do not assume
that $\sharp$ is the smallest element in the ordering of $\Sigma\cup \{\sharp\}$.
This will be important later.

\begin{prop} \label{mikor_egyenlo_BW_suffix}
\label{prop3}
Let $w'=w\sharp$, where $w\in\Sigma^*$, $\sharp\notin\Sigma$ and the ordering of $\Sigma\cup\{\sharp\}$ is arbitrary.
Then $w'$ is primitive, hence the order of its cyclic shifts is
well-defined.
Moreover the order of the cyclic shifts of $w\sharp$
coincides with the order of the suffixes of $w\sharp$.
\end{prop}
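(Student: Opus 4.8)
The plan is to treat the two assertions separately, both resting on the single defining feature that $\sharp$ occurs exactly once in $w'$. For primitivity, suppose $w'=v^m$ for some $v\in(\Sigma\cup\{\sharp\})^+$ and $m\ge 2$. Then every symbol of $\Sigma'$ occurs in $w'$ a number of times divisible by $m$. Since $\sharp$ occurs exactly once and $m\ge 2$, this is impossible, so $m=1$ and $w'$ is primitive. By the remark preceding the definition of the BW-array, its $n+1$ cyclic shifts are then pairwise distinct, so their lexicographic order is well-defined.

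For the second assertion I would set up notation: let $n=|w|$ and write $w'=w'_1\cdots w'_{n+1}$ with $w'_{n+1}=\sharp$. For $1\le j\le n+1$ let $s_j=w'_j\cdots w'_{n+1}$ be the suffix starting at position $j$ and $c_j=w'_j\cdots w'_{n+1}w'_1\cdots w'_{j-1}$ the cyclic shift starting at position $j$. The crucial observation is that each suffix $s_j$ is a prefix of the corresponding cyclic shift $c_j$, since $c_j=s_j\cdot w'_1\cdots w'_{j-1}$. This lets me reduce every comparison of cyclic shifts to a comparison taking place entirely inside their shared initial suffixes.

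The key lemma I would then establish is that no suffix is a prefix of another: suffixes starting at distinct positions have distinct lengths, so for $i\ne j$ assume $|s_j|<|s_i|$; were $s_j$ a prefix of $s_i$, its terminal $\sharp$, sitting at position $|s_j|$, would force $\sharp$ to occur in $w'$ strictly before position $n+1$, contradicting its unique occurrence. Hence any two distinct suffixes first differ at some position $p\le\min(|s_i|,|s_j|)$, where both carry a defined symbol. Granting this, fix $i\ne j$ and let $p$ be that first differing position. Since $s_i$ (resp. $s_j$) is a prefix of $c_i$ (resp. $c_j$), the first $p$ symbols of $c_i$ and $c_j$ coincide with those of $s_i$ and $s_j$; thus $c_i$ and $c_j$ agree on positions $1,\dots,p-1$ and differ at position $p$ with exactly the symbols deciding the comparison of $s_i$ and $s_j$. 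Therefore $c_i<c_j$ if and only if $s_i<s_j$, and the two orderings coincide.

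The main point to get right, and the only place the argument could fail, is the key lemma: everything hinges on $\sharp$ being unique, which simultaneously guarantees that the prefix comparison is never inconclusive and that the deciding position lies within the shared suffix portion. I would stress that nowhere does the argument invoke $\sharp$ being the smallest symbol; it uses only that $\sharp\notin\Sigma$ occurs once. This is precisely why the statement remains valid for an arbitrary ordering of $\Sigma\cup\{\sharp\}$, as the proposition asserts.
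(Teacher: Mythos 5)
Your proof is correct and follows essentially the same route as the paper: the same divisibility argument for primitivity (the unique $\sharp$ cannot have its occurrence count divided by $m\ge 2$), and the same core idea for the order coincidence, namely that the single occurrence of $\sharp$ forces every comparison of two cyclic shifts to be decided within the initial suffix portion. Your ``no suffix is a prefix of another'' lemma is just a more formalized packaging of the paper's observation that comparing two shifts necessarily pits $\sharp$ against a different character no later than the first $\sharp$ position.
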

\begin{proof}
If $w'=u^k$ for some word $u$ and a $k>1$, then $k$ would divide the
number of occurrences of $\sharp$. Therefore $w'$ is primitive.

As $w'$ has only one occurrence of $\sharp$, then in comparing two different
cyclic shifts we necessarily compare $\sharp$ with some other
character. This means that the lexicographic order of two cyclic
shifts is decided no later than at the position of the first
$\sharp$. Therefore if we leave out the characters after the
$\sharp$ in both shifts, we get the same ordering.
\end{proof}



Now, we give two theorems characterizing the permutations that are
suffix arrays for some word $w\sharp$ with $w\in\Sigma^n$.

\begin{theorem} \label{suff_jell_erosebb}
Let $r_i\geq 0$, $1\leq i\leq k$, be integers such that
$\sum^k_{i=1}r_i=n$. A permutation $\pi \in \mathbf{S}_{n+1}$ is the suffix array of a word
$w\sharp$ with $w\in \Sigma^n$ with $r_i$
occurrences of the letter $a_i$, $1\leq i\leq k$,
if and only if
$\Des(\Phi(\pi))\subseteq\{1,1+r_1,1+r_1+r_2,\dots,
1+r_1+\dots +r_{k-1} \}$ and $\pi(1)=n+1$.
Moreover, in this case, $\pi$ is the suffix array of exactly one such
word.
\end{theorem}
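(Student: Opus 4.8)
The plan is to reduce the statement to the BW-array characterization of Theorem~\ref{BW_jell_erosebb} by regarding $w\sharp$ as a primitive word over the enlarged alphabet $\Sigma'=\{\sharp,a_1,\dots,a_k\}$. First I would invoke Proposition~\ref{mikor_egyenlo_BW_suffix}: because $w\sharp$ ends in the unique sentinel $\sharp$, it is primitive and its suffix array coincides with its BW-array. Hence characterizing the suffix arrays of words $w\sharp$ is the same problem as characterizing the BW-arrays of such words over $\Sigma'$, and the whole statement can be transported to the cyclic-shift setting.

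Next I would apply Theorem~\ref{BW_jell_erosebb} to $\Sigma'$, which has $k+1$ letters with occurrence counts $(s_0,s_1,\dots,s_k)=(1,r_1,\dots,r_k)$ summing to $n+1$. The theorem states that $\pi\in\mathbf{S}_{n+1}$ is the BW-array of some primitive word over $\Sigma'$ with these counts if and only if $\Des(\Phi(\pi))$ is contained in the set of partial sums $\{s_0,\,s_0+s_1,\,\dots,\,s_0+\dots+s_{k-1}\}$. Substituting the counts, this set is exactly $\{1,\,1+r_1,\,\dots,\,1+r_1+\dots+r_{k-1}\}$, the descent condition in the statement. This step is purely careful bookkeeping: the single extra smallest letter $\sharp$ adds the leading index $1$ and shifts every remaining partial sum by $s_0=1$.

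The remaining and most delicate point is to show that $\pi(1)=n+1$ is precisely what forces the sentinel to occupy the last position, thereby restricting from an arbitrary word over $\Sigma'$ containing one $\sharp$ to one of the special form $w\sharp$. Here I would use that the descent condition depends only on $\Phi(\pi)$ and is therefore $\sim$-invariant, so it cannot by itself distinguish among the cyclic shifts of a given word; the location of $\sharp$ must be pinned down separately. Since $\sharp$ is the unique smallest symbol of $\Sigma'$, the lexicographically smallest cyclic shift of any word over $\Sigma'$ having exactly one $\sharp$ is the shift starting at the position of $\sharp$. By the definition of the BW-array, $\pi(1)$ is the starting position of this smallest shift, so $\pi(1)=n+1$ holds if and only if $\sharp$ sits at position $n+1$, that is, the word equals $w\sharp$ for some $w\in\Sigma^n$ with $r_i$ occurrences of $a_i$.

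Combining the two directions then yields the equivalence. If $\pi$ is the suffix array of $w\sharp$, it is the BW-array of $w\sharp$, giving both the descent condition and $\pi(1)=n+1$. Conversely, the descent condition produces, via Theorem~\ref{BW_jell_erosebb}, a unique word over $\Sigma'$ with the prescribed counts having $\pi$ as BW-array, and $\pi(1)=n+1$ guarantees this word ends in $\sharp$, hence equals some $w\sharp$ whose suffix array is $\pi$. The ``exactly one such word'' clause is inherited directly from the uniqueness assertion of Theorem~\ref{BW_jell_erosebb}. I expect the main obstacle to be not the reduction itself but the clean justification that $\pi(1)=n+1$ selects exactly the single cyclic representative ending in the sentinel, so that the $\sim$-invariant descent condition and the non-invariant value $\pi(1)$ together cut out precisely the words of the form $w\sharp$.
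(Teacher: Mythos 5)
Your proposal is correct and follows essentially the same route as the paper's own proof: apply Theorem~\ref{BW_jell_erosebb} over the enlarged alphabet $\Sigma'$ with occurrence counts $(1,r_1,\dots,r_k)$, observe that $\pi(1)=n+1$ holds precisely when the unique smallest letter $\sharp$ occupies the last position, and use Proposition~\ref{mikor_egyenlo_BW_suffix} to identify BW-arrays with suffix arrays on words of the form $w\sharp$, with uniqueness inherited from Theorem~\ref{BW_jell_erosebb}. The only difference is expository: you spell out the $\sim$-invariance remark and the two directions more explicitly than the paper's terse three-step argument.
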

\begin{proof}
According to Theorem~\ref{BW_jell_erosebb}, $\pi \in \mathbf{S}_{n+1}$
is the BW-array of a primitive word $w\in (\Sigma\cup\sharp)^{n+1}$
which has $r_i$ occurrences of letter $a_i$
and one occurrence
of symbol $\sharp$, if and only if the first condition is satisfied, and in
this case there is only one such primitive word. Here the primitivity
is immediate, since we have only one occurrence of $\sharp$.
%
Since $\sharp$ is the smallest letter, condition $\pi(1)=n+1$ is
necessary and sufficient for $\sharp$ to be
the last letter. Finally the BW-array coincides with the suffix array
on the class of words type $w\sharp$, by Proposition~\ref{prop3}.
\end{proof}

From Theorem~\ref{suff_jell_erosebb}, we can easily deduce the
following theorem: 

\begin{theorem}
\label{theor3}
A permutation $\pi\in\mathbf{S}_{n+1}$ is the suffix array of a word
$w\sharp$ where $w\in\Sigma^n$ if and only if
(i) $|\Des(\Phi(\pi))\setminus\{1\}|\leq k-1$, and
(ii) $\pi(1)=n+1$.
\end{theorem}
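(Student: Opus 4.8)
The plan is to derive this "existential" characterization directly from the finer Theorem~\ref{suff_jell_erosebb}, which already pins down the suffix arrays of words $w\sharp$ in terms of the descent set of the linking permutation and the value $\pi(1)$. The whole point of Theorem~\ref{theor3} is that it strips away the dependence on the specific letter-multiplicities $r_i$ and the alphabet size $k$, asking only whether $\pi$ is realizable over \emph{some} word on the given alphabet $\Sigma$ of size $k$. So the proof is an existence/counting argument over the possible choices of the $r_i$.

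First I would prove the forward direction, which is immediate. If $\pi$ is the suffix array of some $w\sharp$ with $w\in\Sigma^n$, then by Theorem~\ref{suff_jell_erosebb} there exist integers $r_i\ge 0$ with $\sum r_i = n$ such that $\pi(1)=n+1$ and $\Des(\Phi(\pi))\subseteq\{1,\,1+r_1,\,\dots,\,1+r_1+\cdots+r_{k-1}\}$. Condition (ii) is handed to us directly. For condition (i), observe that the allowed descent set, after removing the element $1$, has at most $k-1$ elements (the partial sums $1+r_1+\cdots+r_j$ for $1\le j\le k-1$); hence $\Des(\Phi(\pi))\setminus\{1\}$ has at most $k-1$ elements.

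The substantive direction is the converse: given (i) and (ii), I must \emph{exhibit} multiplicities $r_1,\dots,r_k$ realizing $\pi$. Write $\Des(\Phi(\pi))\setminus\{1\} = \{d_1 < d_2 < \cdots < d_m\}$ with $m\le k-1$. The idea is to choose the partial sums of the $r_i$ so that the target set $\{1,\,1+r_1,\,\dots,\,1+r_1+\cdots+r_{k-1}\}$ contains every $d_j$. Since each $d_j\ge 2$ (because we removed $1$, and descents lie in $[1,n]$), I would set the first few partial sums $1+r_1+\cdots+r_i$ equal to the values $d_1,\dots,d_m$ in order, which forces $r_1=d_1-1$ and $r_{j} = d_{j}-d_{j-1}$ for $2\le j\le m$; these are nonnegative since the $d_j$ are strictly increasing and $d_1\ge 2$. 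I then use the remaining $k-m$ multiplicities ($k-m\ge 1$ since $m\le k-1$) to absorb the rest of $n$: assign the leftover mass $n-d_m$ (which is $\ge 0$ since descents are at most $n-1$, so $d_m\le n$, and one checks $d_m\le n$ suffices) to the last block, setting e.g.\ $r_{m+1}=n-(d_m-1)$ and $r_{m+2}=\cdots=r_k=0$. With these choices $\sum_{i=1}^k r_i = n$ and $\{1,1+r_1,\dots,1+r_1+\cdots+r_{k-1}\}\supseteq\{1\}\cup\{d_1,\dots,d_m\}=\{1\}\cup(\Des(\Phi(\pi))\setminus\{1\})\supseteq\Des(\Phi(\pi))$, so the hypotheses of Theorem~\ref{suff_jell_erosebb} hold and $\pi$ is the suffix array of the corresponding $w\sharp$.

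The main obstacle is purely bookkeeping: I must verify all the nonnegativity and boundary constraints on the constructed $r_i$ and confirm they sum to exactly $n$, paying attention to the edge cases $m=0$ (no nontrivial descents, so put all mass in one block) and $m=k-1$ (only one free block left). The conceptual work is trivial once Theorem~\ref{suff_jell_erosebb} is in hand; the care needed is in checking that $d_m\le n$ guarantees a valid nonnegative remainder and that the degenerate cases are handled without an off-by-one error in the indexing of partial sums.
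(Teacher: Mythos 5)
Your proposal is correct and follows essentially the same route as the paper: the forward direction is read off from Theorem~\ref{suff_jell_erosebb}, and the converse constructs letter multiplicities from the descents via $r_1=d_1-1$, $r_j=d_j-d_{j-1}$, with the leftover mass $n-d_m+1$ placed in one remaining block (you put it in $r_{m+1}$, the paper implicitly in $r_k$ by zeroing $r_{\ell+1},\dots,r_{k-1}$ --- an immaterial difference). Your explicit attention to nonnegativity, the sum constraint, and the $m=0$ edge case is slightly more careful bookkeeping than the paper provides, but the argument is the same.
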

\begin{proof}
Let $\pi\in\mathbf{S}_{n+1}$ be the suffix array of a word $w\sharp$
for $w\in\Sigma^n$. Assume $w$ has $r_i\geq 0$ occurrences of
letter $a_i$ for each $i\in[1,k]$. Then
conditions $(i)$ and $(ii)$ follow immediately from
Theorem~\ref{suff_jell_erosebb}.
Conversely, let $\Des(\Phi(\pi))\setminus\{1\}=\{d_1,d_2,\dotsc,d_\ell\}$ for
$\ell\leq k-1$. Then for $r_1=d_1-1$, $r_2=d_2-d_1$, \ldots,
$r_\ell=d_\ell-d_{\ell-1}$, $r_{\ell+1}=\dotsc=r_{k-1}=0$, we have
$\Des(\Phi(\pi))\subseteq\{1,1+r_1,1+r_1+r_2,\dots, 1+r_1+\dots
+r_{k-1} \}$. $\pi(1)=n+1$ is also satisfied.
Then, by Theorem~\ref{suff_jell_erosebb}, there is a word $w\sharp$ with the
corresponding numbers of letter occurrences that has $\pi$ as its
suffix array.
\end{proof}

Now we provide a characterization of suffix arrays for the case where we do not
assume a sentinel symbol at the end of the word.
Proposition~\ref{claim:vegeszimb_ekvivalencia} combined with
Theorem~\ref{suff_jell_erosebb} and Theorem~\ref{theor3} respectively imply the following results.

\begin{theorem} \label{suff_jell_erosebb_vegejel_nelkul}
Let $r_i\geq 0$, $1\leq i\leq k$, be integers such that $\sum^k_{i=1}r_i=n$.
A permutation $\pi \in \mathbf{S}_n$ is the suffix array of a word
$w\in \Sigma^n$ with $r_i$ occurrences of the letter $a_i$,
$1 \leq i \leq k$,
if and only if, for $\pi'=(n+1)\, \pi(1)\dots \pi(n)$,
$\Des(\Phi(\pi'))\subseteq\{1,1+r_1,1+r_1+r_2,\dots, 1+r_1+\dots
+r_{k-1} \}$. Moreover, in this case $\pi$ is the suffix array of
exactly one such word.
\end{theorem}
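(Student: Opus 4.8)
The plan is to derive Theorem~\ref{suff_jell_erosebb_vegejel_nelkul} directly from the already-established Theorem~\ref{suff_jell_erosebb} by means of the bijection furnished by Proposition~\ref{claim:vegeszimb_ekvivalencia}. The key observation is that the two statements are essentially the same assertion, once one recognizes that appending the sentinel turns the length-$n$ suffix array $\pi$ into the length-$(n+1)$ suffix array $\pi'=(n+1)\,\pi(1)\dots\pi(n)$.

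First I would invoke Proposition~\ref{claim:vegeszimb_ekvivalencia} to record the bijection: $\pi\in\mathbf{S}_n$ is the suffix array of $w\in\Sigma^n$ if and only if $\pi'=(n+1)\,\pi(1)\dots\pi(n)\in\mathbf{S}_{n+1}$ is the suffix array of $w\sharp$. Note that by construction $\pi'$ automatically satisfies $\pi'(1)=n+1$, so this half of the hypothesis of Theorem~\ref{suff_jell_erosebb} is free. Moreover the number of occurrences of each letter $a_i$ in $w$ equals the number of occurrences of $a_i$ in $w\sharp$ (the sentinel $\sharp$ contributes nothing to the counts $r_i$), so the multiplicities $r_1,\dots,r_k$ with $\sum_i r_i=n$ are preserved under the correspondence.

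Next I would apply Theorem~\ref{suff_jell_erosebb} to $\pi'$. That theorem states that $\pi'\in\mathbf{S}_{n+1}$ is the suffix array of a word $w\sharp$ with $r_i$ occurrences of $a_i$ if and only if $\Des(\Phi(\pi'))\subseteq\{1,1+r_1,1+r_1+r_2,\dots,1+r_1+\dots+r_{k-1}\}$ together with $\pi'(1)=n+1$. Since the latter condition holds automatically for any $\pi'$ of the form $(n+1)\,\pi(1)\dots\pi(n)$, the only substantive condition is the descent-set inclusion for $\Phi(\pi')$. Chaining the two equivalences gives exactly the stated biconditional: $\pi$ is the suffix array of $w$ (with the prescribed letter multiplicities) if and only if $\Des(\Phi(\pi'))$ lies in the indicated set. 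The ``exactly one such word'' clause transfers immediately, because the bijection of Proposition~\ref{claim:vegeszimb_ekvivalencia} is one-to-one and Theorem~\ref{suff_jell_erosebb} already guarantees uniqueness of $w\sharp$, whence uniqueness of $w$.

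I do not expect any serious obstacle, since this is a composition of two results both stated earlier. The only point demanding a moment of care is verifying that the sentinel does not alter the letter-count data, and that the construction $\pi\mapsto\pi'$ always lands in the set of permutations meeting the side condition $\pi'(1)=n+1$; both are immediate from the explicit form of $\pi'$ given in Proposition~\ref{claim:vegeszimb_ekvivalencia}. Thus the entire argument reduces to stitching together the bijection and Theorem~\ref{suff_jell_erosebb}, with the $\pi'(1)=n+1$ hypothesis being absorbed ``for free'' by the shape of $\pi'$.
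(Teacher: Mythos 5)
Your proof is correct and follows exactly the paper's route: the paper derives this theorem in one line by combining Proposition~\ref{claim:vegeszimb_ekvivalencia} with Theorem~\ref{suff_jell_erosebb}, which is precisely your argument, fleshed out with the (correct) observations that $\pi'(1)=n+1$ holds automatically and that the sentinel does not change the letter counts.
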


\begin{theorem} \label{theorem_without_endline_symbol}
A permutation $\pi\in\mathbf{S}_n$ is the suffix array of some word
$w\in\Sigma^n$ if and only if, for $\pi'=(n+1)\, \pi(1)\dots \pi(n)$, we
have $|\Des(\Phi(\pi'))\setminus\{1\}|\leq k-1$.
\end{theorem}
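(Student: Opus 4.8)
The plan is to derive Theorem~\ref{theorem_without_endline_symbol} directly from Theorem~\ref{theor3} via the correspondence recorded in Proposition~\ref{claim:vegeszimb_ekvivalencia}, exactly as the surrounding text announces. The key observation is that the condition in Theorem~\ref{theor3} governs permutations of $\mathbf{S}_{n+1}$ that are suffix arrays of words of the form $w\sharp$, and that appending the sentinel $\sharp$ sets up a bijection with suffix arrays of the underlying word $w\in\Sigma^n$. So the whole task reduces to transporting the characterization across that bijection.

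First I would invoke Proposition~\ref{claim:vegeszimb_ekvivalencia}: a permutation $\pi\in\mathbf{S}_n$ is the suffix array of some $w\in\Sigma^n$ if and only if the associated permutation $\pi'=(n+1)\,\pi(1)\dots\pi(n)\in\mathbf{S}_{n+1}$ is the suffix array of $w\sharp$. Note that $\pi'$ already has $\pi'(1)=n+1$ built into its definition, so condition (ii) of Theorem~\ref{theor3} is automatically satisfied by every such $\pi'$ and need not be mentioned in the statement being proved. Then I would apply Theorem~\ref{theor3} to $\pi'$: since condition (ii) holds for free, $\pi'$ is the suffix array of a word $w\sharp$ with $w\in\Sigma^n$ if and only if condition (i) holds, namely $|\Des(\Phi(\pi'))\setminus\{1\}|\leq k-1$. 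Chaining the two equivalences gives precisely the claimed statement.

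Concretely, the forward direction: if $\pi$ is the suffix array of some $w\in\Sigma^n$, then by Proposition~\ref{claim:vegeszimb_ekvivalencia} the permutation $\pi'$ is the suffix array of $w\sharp$, whence Theorem~\ref{theor3} yields $|\Des(\Phi(\pi'))\setminus\{1\}|\leq k-1$. Conversely, if $|\Des(\Phi(\pi'))\setminus\{1\}|\leq k-1$, then because $\pi'(1)=n+1$ holds by construction, both hypotheses of Theorem~\ref{theor3} are met, so $\pi'$ is the suffix array of some $w\sharp$ with $w\in\Sigma^n$; applying Proposition~\ref{claim:vegeszimb_ekvivalencia} in the other direction shows that $\pi$ is the suffix array of $w$.

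There is essentially no serious obstacle here, since both ingredients are already established; the statement is flagged in the text as an immediate consequence. The only point requiring a moment's care is verifying that the $\pi'$ produced from an arbitrary $\pi\in\mathbf{S}_n$ genuinely lands in the domain to which Theorem~\ref{theor3} applies and that condition (ii) is absorbed by the construction $\pi'=(n+1)\,\pi(1)\dots\pi(n)$; this is exactly why Theorem~\ref{theorem_without_endline_symbol} drops the explicit requirement $\pi(1)=n+1$ that appears in Theorem~\ref{theor3}. Once that bookkeeping is noted, the proof is a one-line composition of the two cited results.
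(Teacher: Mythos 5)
Your proposal is correct and follows exactly the paper's route: the paper derives this theorem by combining Proposition~\ref{claim:vegeszimb_ekvivalencia} with Theorem~\ref{theor3}, which is precisely your argument, including the observation that $\pi'(1)=n+1$ holds by construction so condition (ii) of Theorem~\ref{theor3} is automatic. Your write-up just spells out in full the chaining of equivalences that the paper leaves implicit.
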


Finally we give a result stating a
bijection between the suffix arrays over an alphabet $\Sigma$ and a certain set of permutations.

\begin{theorem} \label{th:bijekcio_fivel}
For a permutation $\pi\in\mathbf{S}_n$, let $\pi'=(n+1)\,\pi(1)\dots\pi(n)$.
The mapping $\pi \mapsto \Phi(\pi')$ is a bijection between the suffix
arrays of words $w\in\Sigma^n$
and the permutations $\varphi\in\mathbf{S}_{n+1}^c$ with
$|\Des(\varphi)\setminus\{1\}|\leq k-1$.
Moreover,
given such a $\varphi\in\mathbf{S}_{n+1}^c$, we can easily compute the
corresponding suffix array $\pi$ as follows:
$\pi^{-1}(i)=\varphi^{i}(1)-1$ for each $i\in[1,n]$.
\end{theorem}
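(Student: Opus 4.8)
The plan is to factor the map $\pi\mapsto\Phi(\pi')$ as the composition of the ``prepend $n+1$'' map $\pi\mapsto\pi'$ with the linking mapping $\Phi$, and then read off both the bijection and the reconstruction formula from Proposition~\ref{phi_jell} and Theorem~\ref{theorem_without_endline_symbol}.

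First I would observe that $\pi\mapsto\pi'$ is a bijection from $\mathbf{S}_n$ onto the set $T=\{\sigma\in\mathbf{S}_{n+1}:\sigma(1)=n+1\}$: its inverse simply drops the leading entry, leaving a permutation of $[1,n]$. Next I claim $T$ is a transversal of $\mathbf{S}_{n+1}/\!\sim$, that is, each $\sim$-class contains exactly one member of $T$. Indeed, the class of $\sigma$ consists of the shifts $\sigma+k$, whose first entries $\sigma(1)+k$ run through all of $[1,n+1]$; hence exactly one shift has first entry $n+1$. Combined with Proposition~\ref{phi_jell}, which states that $\Phi$ induces a bijection $\mathbf{S}_{n+1}/\!\sim\;\to\;\mathbf{S}_{n+1}^c$, this shows that $\Phi$ restricted to $T$ is a bijection onto $\mathbf{S}_{n+1}^c$. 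Composing, $\pi\mapsto\Phi(\pi')$ is a bijection from $\mathbf{S}_n$ onto $\mathbf{S}_{n+1}^c$. It then remains to cut down to the stated subsets: by Theorem~\ref{theorem_without_endline_symbol}, $\pi\in\mathbf{S}_n$ is the suffix array of some $w\in\Sigma^n$ if and only if $|\Des(\Phi(\pi'))\setminus\{1\}|\leq k-1$, i.e. if and only if $\varphi=\Phi(\pi')$ lies in the prescribed set. Restricting the bijection just established to these two matching subsets yields the desired bijection.

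For the reconstruction formula I would use the iteration identity recorded right after the definition of the linking permutation, namely $\varphi((\pi')^{-1}(i))=(\pi')^{-1}(i+1)$ with arithmetic taken modulo $n+1$. Iterating gives $\varphi^{j}((\pi')^{-1}(i))=(\pi')^{-1}(i+j)$. Since $\pi'(1)=n+1$ we have $(\pi')^{-1}(n+1)=1$, so for $i\in[1,n]$, $\varphi^{i}(1)=\varphi^{i}((\pi')^{-1}(n+1))=(\pi')^{-1}(n+1+i)=(\pi')^{-1}(i)$, using $n+1+i\equiv i\pmod{n+1}$. Finally, from $\pi'=(n+1)\,\pi(1)\dots\pi(n)$ one gets $(\pi')^{-1}(i)=\pi^{-1}(i)+1$ for $i\in[1,n]$, because any value $i\leq n$ sits one position later in $\pi'$ than in $\pi$. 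Combining the last two identities gives $\pi^{-1}(i)=\varphi^{i}(1)-1$.

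The bijection itself is essentially immediate from the earlier results; the one step demanding care is the reconstruction formula, where I must keep the two moduli straight (we now work modulo $n+1$ rather than $n$) and correctly account for the position shift introduced by prepending $n+1$. I expect this bookkeeping, rather than any conceptual difficulty, to be the main thing to get right.
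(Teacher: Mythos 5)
Your proof is correct and follows essentially the same route as the paper: both factor the map through the quotient $\mathbf{S}_{n+1}/\!\sim$ (your transversal $T$ is just an explicit witness to the paper's assertion that $\pi\mapsto[\pi']$ is a bijection onto the quotient), invoke Proposition~\ref{phi_jell} and Theorem~\ref{theorem_without_endline_symbol}, and derive the reconstruction formula by iterating $\varphi((\pi')^{-1}(i))=(\pi')^{-1}(i+1)$. Your write-up merely spells out details (the transversal claim, the index bookkeeping $(\pi')^{-1}(i)=\pi^{-1}(i)+1$) that the paper leaves implicit.
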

\begin{proof}
Let us denote the $\sim$ equivalence class of a
$\sigma\in\mathbf{S}_{n+1}$ by $[\sigma]$.
The mapping
$f:\mathbf{S}_n\rightarrow\mathbf{S}_{n+1}/\sim$
defined by $f(\pi)=[\pi']$ is a bijection between $\mathbf{S}_n$ and
$\mathbf{S}_{n+1}/\sim$.
$\Phi$ is bijective from ${\mathbf{S}_{n+1}}/\!\sim$ to $\mathbf{S}_{n+1}^c$,
and hence $\pi\mapsto\Phi(\pi')$ is a bijection from $\mathbf{S}_n$ to
$\mathbf{S}_{n+1}^c$.
According to Theorem~\ref{theorem_without_endline_symbol},
a permutation $\pi\in\mathbf{S}_n$ is a suffix array of some word
in $\Sigma^n$
if and only if, for its its image $\Phi(\pi')$,
$|\Des(\Phi(\pi'))\setminus\{1\}|\leq k-1$.
Then it follows that the restriction
of the mapping to the set of suffix permutations gives a bijection
into the set of permutations $\varphi\in\mathbf{S}_{n+1}^c$ with
$|\Des(\varphi)\setminus\{1\}|\leq k-1$.

As for the computation of the inverse mapping, we know that
$(\pi')^{-1}(n+1)=1$ and that $\Phi(\pi')((\pi')^{-1}(i))=(\pi')^{-1}(i+1)$.
Therefore, if $\varphi=\Phi(\pi')$, then
$\pi^{-1}(i)=(\pi')^{-1}(i)-1=\varphi^{i}(1)-1$ for all
$i\in[1,n]$.
\end{proof}

\section{Relation to the characterization of He et
  al}

He et al. \cite{He2005} proposed a characterization of suffix arrays
for a binary alphabet $\Sigma=\{a,b\}$ in the special case where the
sentinel character $\sharp$ is ranked between the characters of
$\Sigma$, \emph{i.e.}, $a < \sharp < b$.
In this case, the
lexicographic order of suffixes of $w$ can be different from the
lexicographic order of the corresponding suffixes of $w\sharp$,
therefore this definition gives a slightly different suffix array notion.

In this section, we elucidate how the characterization of \cite{He2005} is
related to our characterizations given in Section~\ref{sect-charact}.
In particular, we show that our approach yields a simpler
characterization that implies the result of \cite{He2005}.
Before describing the characterization of~\cite{He2005}, we show
that Theorem~\ref{BW_jell_erosebb} allows us to obtain a
characterization of suffix arrays for this kind of alphabet ordering
as well, similarly to the usual ordering of the previous section.

\begin{theorem}
\label{our-theorem}
A permutation $\pi\in \mathbf{S}_{n+1}$ is the suffix array of a word
$w\sharp$ with $w\in\{a,b\}^n$ and $a < \sharp < b$ if and only if
$\Des(\Phi(\pi))\subseteq\{\pi^{-1}(n+1)-1,\pi^{-1}(n+1)\}$.
\end{theorem}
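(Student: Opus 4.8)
The plan is to reduce this to the already-proven Theorem~\ref{BW_jell_erosebb} by exploiting the robustness of Proposition~\ref{prop3}, which was stated precisely for an \emph{arbitrary} ordering of $\Sigma\cup\{\sharp\}$. The key observation is that even when $a<\sharp<b$, appending the unique sentinel $\sharp$ still makes $w\sharp$ primitive and still forces cyclic-shift sorting to coincide with suffix sorting, since any comparison of two distinct cyclic shifts is resolved at or before the first occurrence of $\sharp$. Thus a permutation $\pi\in\mathbf{S}_{n+1}$ is the suffix array of $w\sharp$ if and only if it is the BW-array of the primitive word $w\sharp$ over the ordered alphabet $\Sigma'=\{a,\sharp,b\}$ with $a<\sharp<b$.

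First I would apply Theorem~\ref{BW_jell_erosebb} with $k=3$ to this ordered alphabet, where the three letters in increasing order are $a$, $\sharp$, $b$, with multiplicities $r_1$ (number of $a$'s in $w$), $r_2=1$ (the single $\sharp$), and $r_3$ (number of $b$'s in $w$), so that $r_1+r_3=n$. The theorem then says $\pi$ is the BW-array of such a word iff $\Des(\Phi(\pi))\subseteq\{r_1,\,r_1+1\}$. The remaining task is purely to identify the quantity $r_1$ in permutation-theoretic terms, namely to show $r_1=\pi^{-1}(n+1)-1$. The intuition is that the rank of the sentinel among all cyclic shifts equals one plus the number of letters strictly smaller than $\sharp$ that begin a shift, i.e.\ $r_1$ shifts begin with $a$ (hence precede the shift beginning with $\sharp$), so the shift starting at $\sharp$ has rank $r_1+1$; since $\pi(i)=n+1$ marks the shift beginning at the sentinel position, we get $\pi^{-1}(n+1)=r_1+1$, whence $r_1=\pi^{-1}(n+1)-1$.

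To nail the rank argument I would reason about the lexicographic position of the shift that starts with $\sharp$: every shift beginning with $a$ is lexicographically smaller than it, every shift beginning with $b$ is larger, and there is exactly one shift beginning with $\sharp$. Because $w$ has exactly $r_1$ occurrences of $a$, there are exactly $r_1$ cyclic shifts starting with $a$, giving the claimed rank $r_1+1$ for the $\sharp$-shift and thus $\pi^{-1}(n+1)=r_1+1$. Substituting $r_1=\pi^{-1}(n+1)-1$ into the descent condition $\Des(\Phi(\pi))\subseteq\{r_1,r_1+1\}$ yields exactly $\Des(\Phi(\pi))\subseteq\{\pi^{-1}(n+1)-1,\pi^{-1}(n+1)\}$, which is the desired statement. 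Conversely, given the descent condition, I would set $r_1=\pi^{-1}(n+1)-1$ and $r_3=n-r_1$ and invoke the ``if'' direction of Theorem~\ref{BW_jell_erosebb} to produce the word.

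The main obstacle I expect is the bookkeeping in the converse direction: I must verify that the $r_1$ recovered from the descent set is mutually consistent with the position of $n+1$ in $\pi$, so that the word produced by Theorem~\ref{BW_jell_erosebb} genuinely has its $\sharp$ at the rank dictated by $\pi^{-1}(n+1)$. In the ordinary ordering of Section~\ref{sect-charact} this was handled by the clean side condition $\pi(1)=n+1$ (the sentinel being smallest makes the $\sharp$-shift the globally smallest, forcing rank $1$); here the sentinel sits in the middle, so its rank is no longer fixed a priori but is instead pinned down implicitly by the descent set, and I must confirm these two determinations of $r_1$ never conflict. Showing this compatibility — essentially that the descent structure of $\Phi(\pi)$ already encodes the sentinel's rank consistently — is the delicate point, though it follows once the rank identity $\pi^{-1}(n+1)=r_1+1$ is established as a genuine equality rather than merely an upper bound.
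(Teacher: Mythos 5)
Your proposal is correct and follows essentially the same route as the paper's own proof: reduce suffix arrays to BW-arrays via Proposition~\ref{prop3} (which indeed holds for an arbitrary ordering of $\Sigma\cup\{\sharp\}$), apply Theorem~\ref{BW_jell_erosebb} to the three-letter alphabet $a<\sharp<b$ with multiplicities $(r_1,1,n-r_1)$, and identify $r_1=\pi^{-1}(n+1)-1$ through the rank of the sentinel shift. The delicate point you flag in the converse is resolved exactly along the lines you indicate, and is what the paper does: the unique word $w'$ produced by Theorem~\ref{BW_jell_erosebb} has non-decreasing first letters in the order given by $\pi$, so its single $\sharp$ begins the shift of rank $r_1+1=\pi^{-1}(n+1)$, which by definition of the BW-array starts at position $\pi(\pi^{-1}(n+1))=n+1$, forcing $\sharp$ to be the last letter and hence $w'=w\sharp$ with $w\in\{a,b\}^n$.
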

\begin{proof}
Let $\pi \in \mathbf{S}_{n+1}$. By Proposition
\ref{mikor_egyenlo_BW_suffix},
$\pi$ is the suffix array of $w\sharp$ if and only
if it is the BW-arrays of $w\sharp$. Therefore it is enough to prove the
theorem for BW arrays instead of suffix arrays.

We first show the 'only if' part. Observe that if $\pi$ is the BW array of $w\sharp$, then $w_{\pi(i)}=a$ for
$i<\pi^{-1}(n+1)$, and $w_{\pi(i)}=b$ for $i >\pi^{-1}(n+1)$.
Therefore $w$ has $\pi^{-1}(n+1)-1$ occurrences of $a$, 1 occurrence of $\sharp$ and
$n+1-\pi^{-1}(n+1)$ occurrences of $b$.
By Theorem \ref{BW_jell_erosebb}, we immediately obtain
$\Des(\Phi(\pi))\subseteq\{\pi^{-1}(n+1)-1,\pi^{-1}(n+1)\}$.

We now prove the 'if' part. Suppose that $\Des(\Phi(\pi))\subseteq\{\pi^{-1}(n+1)-1,\pi^{-1}(n+1)\}$.
From Theorem \ref{BW_jell_erosebb} there exists exactly one word $w'\in \{a,\sharp,b\}^{n+1}$ that has $\pi^{-1}(n+1)-1$ occurrences of $a$, 1 occurrence of $\sharp$ and $n+1-\pi^{-1}(n+1)$ occurrences of $b$ and which has $\pi$ as BW array. From $w'_{\pi(1)}\leq w'_{\pi(2)}\leq \dots\leq w'_{\pi(n+1)}$, this word is the following:
$w'_{\pi(i)}=a$ for $i<\pi^{-1}(n+1)$,
$w'_{\pi(\pi^{-1}(n+1))}=w'_{n+1}=\sharp$, and
$w'_{\pi(i)}=b$ for $i >\pi^{-1}(n+1)$.
We can see, that $w'=w\sharp$ where $w\in\{a,b\}^n$, therefore we have the sufficiency of the condition.

\end{proof}

Now, we repeat the characterization given by He et al. \cite{He2005}. We need some additional definitions.

\begin{defi}
[Ascending-to-max \cite{He2005}]
A permutation $\pi\in \mathbf{S}_{n+1}$ is \emph{ascending-to-max} if
and only if, for every $i\in[1,n-1]$, we have
\begin{enumerate}
\item[(a)] if $\pi^{-1}(i)<\pi^{-1}(n+1),\ \pi^{-1}(i+1)<\pi^{-1}(n+1)$, then $\pi^{-1}(i) < \pi^{-1}(i+1)$, and
\item[(b)] if $\pi^{-1}(i)>\pi^{-1}(n+1),\ \pi^{-1}(i+1)>\pi^{-1}(n+1)$, then $\pi^{-1}(i) > \pi^{-1}(i+1)$.
\end{enumerate}
\end{defi}

\begin{defi}
[Non-nesting \cite{He2005}]
A permutation $\pi\in \mathbf{S}_{n+1}$ is \emph{non-nesting} if and
only if, for each $i,j\in[1,n]$ such that $\pi^{-1}(i) < \pi^{-1}(j)$, if
\begin{enumerate}
\item[(a)]  $\pi^{-1}(i)<\pi^{-1}(i+1) \quad $ and $\quad \pi^{-1}(j)<\pi^{-1}(j+1)$, or
\item[(b)]  $\pi^{-1}(i)>\pi^{-1}(i+1)\quad $ and $\quad \pi^{-1}(j)>\pi^{-1}(j+1)$,
\end{enumerate}
then $\pi^{-1}(i+1) < \pi^{-1}(j+1)$.
\end{defi}

The characterization of \cite{He2005} is as follows.

\begin{theorem}[\cite{He2005}]
\label{He-theorem}
A permutation $\pi\in \mathbf{S}_{n+1}$ is the suffix array of a word
$w\sharp$ with $w\in\{a,b\}^n$ and $a < \sharp < b$ if and only if it
is both ascending-to-max and non-nesting.
\end{theorem}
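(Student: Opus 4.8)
The plan is to read off Theorem~\ref{He-theorem} from Theorem~\ref{our-theorem}: the latter already says that $\pi\in\mathbf{S}_{n+1}$ is the suffix array of such a $w\sharp$ if and only if $\Des(\Phi(\pi))\subseteq\{\pi^{-1}(n+1)-1,\pi^{-1}(n+1)\}$, so it suffices to prove the purely combinatorial equivalence that $\Des(\Phi(\pi))\subseteq\{p-1,p\}$ holds (with $p:=\pi^{-1}(n+1)$) if and only if $\pi$ is both ascending-to-max and non-nesting. I would write $\varphi=\Phi(\pi)$ and recall the identity $\varphi(\pi^{-1}(i))=\pi^{-1}(i+1)$ from the definition of the linking permutation. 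Setting $r_i=\pi^{-1}(i)$, this identity reads $\varphi(r_i)=r_{i+1}$, which is exactly the quantity compared in both of He et al.'s conditions, so it lets me rephrase ascending-to-max and non-nesting, which are stated through $\pi^{-1}$, as statements about $\varphi$.

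First I would record the two translations. The alternative $r_i<p$ versus $r_i>p$ distinguishes whether the rank $r_i$ lies in the block $[1,p-1]$ or the block $[p+1,n+1]$ (informally, whether the $i$-th text position carries an $a$ or a $b$), while the alternative $r_i<r_{i+1}$ versus $r_i>r_{i+1}$ is, via $r_{i+1}=\varphi(r_i)$, exactly the alternative $\varphi(r_i)>r_i$ versus $\varphi(r_i)<r_i$. Under this dictionary, ascending-to-max asserts that every $m\in[1,p-1]$ satisfies $\varphi(m)>m$ and every $m\in[p+1,n+1]$ satisfies $\varphi(m)<m$, whereas non-nesting asserts that $\varphi$ is increasing on $\{m\neq p:\varphi(m)>m\}$ and increasing on $\{m\neq p:\varphi(m)<m\}$. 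On the other side, $\Des(\varphi)\subseteq\{p-1,p\}$ is equivalent to saying that $\varphi$ is increasing on $[1,p-1]$ and increasing on $[p+1,n+1]$. Thus the whole statement reduces to showing that these two increasing runs match up with the comparison-with-$p$ and comparison-with-$\varphi(m)$ dichotomies.

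For the direction ascending-to-max and non-nesting $\Rightarrow$ descent condition, I would first show $[1,p-1]\subseteq\{m:\varphi(m)>m\}$: for $m\in[1,p-1]$, if $\varphi(m)\geq p$ this is immediate, and if $\varphi(m)<p$ then $m\neq\pi^{-1}(n)$ (since $\varphi(\pi^{-1}(n))=p$), so $\pi(m)\in[1,n-1]$ and clause (a) of ascending-to-max applies; symmetrically $[p+1,n+1]\subseteq\{m:\varphi(m)<m\}$. Then non-nesting, applied to these blocks, yields that $\varphi$ is increasing on each of $[1,p-1]$ and $[p+1,n+1]$, which is the descent condition. For the converse, assuming $\Des(\varphi)\subseteq\{p-1,p\}$, I would prove $[1,p-1]\subseteq\{m:\varphi(m)>m\}$ by a minimal-counterexample argument: if $m_0$ is the least index in $[1,p-1]$ with $\varphi(m_0)\leq m_0$, then $\varphi$ increasing on $[1,p-1]$ together with $\varphi(m_0-1)>m_0-1$ forces $\varphi(m_0)>m_0$, a contradiction, where fixed points are excluded because $\varphi\in\mathbf{S}_{n+1}^c$; the maximal-counterexample version gives $[p+1,n+1]\subseteq\{m:\varphi(m)<m\}$. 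Reading these inclusions back through the dictionary delivers ascending-to-max, and the two increasing runs deliver non-nesting.

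The main obstacle I anticipate is reconciling the two different classifications of positions that He et al.'s conditions rely on, namely the $a$/$b$ split governed by comparison with $p$ that underlies ascending-to-max versus the split governed by comparison of $r_i$ with $r_{i+1}=\varphi(r_i)$ that underlies non-nesting, and proving that under the descent condition they coincide on $[1,p-1]$ and on $[p+1,n+1]$. The delicate point in both directions is the boundary rank $\pi^{-1}(n)$, whose image under $\varphi$ is precisely $p$ and which therefore sits at the interface of the two blocks and just outside the index range of ascending-to-max; keeping track of it, together with the fixed-point-freeness coming from $\varphi\in\mathbf{S}_{n+1}^c$, is what makes the two inclusions go through.
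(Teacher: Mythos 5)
Your proposal is correct and follows essentially the same route as the paper: both deduce Theorem~\ref{He-theorem} from Theorem~\ref{our-theorem} by translating ascending-to-max and non-nesting into statements about $\varphi=\Phi(\pi)$ via $\varphi(\pi^{-1}(i))=\pi^{-1}(i+1)$, and then proving the equivalence with $\Des(\varphi)\subseteq\{\pi^{-1}(n+1)-1,\pi^{-1}(n+1)\}$ using the sign dichotomy $\varphi(m)\gtrless m$, fixed-point-freeness from $\varphi\in\mathbf{S}_{n+1}^c$, and a propagation argument (your minimal/maximal-counterexample is the paper's ``a sign change forces a descent'' step). Your write-up is in fact slightly more careful than the paper's, which labels the forward implication ``trivial'' and glosses over the boundary role of $\pi^{-1}(n)$ that you treat explicitly.
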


We now show that the condition of Theorem~\ref{He-theorem} is equivalent
to that of Theorem~\ref{our-theorem}.
Let $\varphi = \Phi(\pi)$.
We have $\varphi(\pi^{-1}(i))\!=\pi^{-1}(i+1).$
Therefore, the ascending-to-max property reduces to
$i < \varphi(i)$ for $i \in[1,\pi^{-1}(n+1)-1]$, and
$i > \varphi(i)$
for $i\in [\pi^{-1}(n+1)+1,n+1]$.
As for the non-nesting property, we have the following:
for $i,j\in[1,n]\setminus\{\pi^{-1}(n+1)\}$,
if $i < \varphi(i)$ and $j < \varphi(j)$, or
$i > \varphi(i)$ and $j > \varphi(j)$,
then $i<j$ implies $\varphi(i)<\varphi(j)$.

We show that the two conditions together are equivalent to the condition of
Theorem~\ref{our-theorem}.
The two conditions together trivially imply the condition of
Theorem~\ref{our-theorem}.
Conversely,
suppose that $\Des(\Phi(\pi))\subseteq\{\pi^{-1}(n+1)-1,\pi^{-1}(n+1)\}$.
$\varphi$ has one orbit, and hence $\varphi(1)>1$. If for some $j\in[1,n]$ we have
$\varphi(j)>j$ and $ \varphi(j+1)<j+1$, then $j$ is a descent.
Hence for $i\in[1,\pi^{-1}(n+1)-1]$, $i < \varphi(i)$. Similarly,for
$i \in [\pi^{-1}(n+1)+1,n+1]$, $i > \varphi(i)$.
From $\Des(\Phi(\pi))\subseteq\{\pi^{-1}(n+1)-1,\pi^{-1}(n+1)\}$ it follows
that $\varphi$ is monotone on $[1,\pi^{-1}(n+1)-1]$ and on $[\pi^{-1}(n+1)+1,n+1]$,
and hence the non-nesting property is also satisfied.


\section{Enumerations} \label{szamolasok}

Our characterization theorems from Section~\ref{sect-charact} can also
be used to count objects related to suffix arrays. Sch\"urmann and Stoye
\cite{Schurmann2008} obtained some counting results using ``direct''
combinatorial considerations. Here we give shorter proofs of these
results, based on bijections provided by
Theorems~\ref{suff_jell_erosebb_vegejel_nelkul}-\ref{th:bijekcio_fivel}
from Section~\ref{sect-charact}. The underlying ideas of the proofs are the same as in \cite{Schurmann2008}, the simplification is due to the more simple characterization of suffix arrays.

The following enumerations have been studied in \cite{Schurmann2008}.
\begin{enumerate}
 \item[(i)] For a permutation $\pi\in \mathbf{S}_n$, count the number
   of words of length $n$ over an alphabet of size $k$ that have $\pi$
   as their suffix array,
 \item[(ii)] For a permutation $\pi\in \mathbf{S}_n$, count the number
   of words of length $n$ over an alphabet of size $k$, that have at
   least one
   occurrence of each letter and have $\pi$ as their suffix array,
 \item[(iii)] Count the number of permutations $\pi\in \mathbf{S}_n$
   that are suffix arrays of some word over an alphabet of size $k$.
\end{enumerate}


We start with question (i).

\begin{theorem}[\cite{Schurmann2008}]
\label{enumeration1}
 For a permutation $\pi\in\mathbf{S}_n$, let $\pi'=(n+1)\,\pi(1)\dots\pi(n)$.
 The number of words of length $n$ over an alphabet of size $k$ having
 $\pi$ as their suffix array is
 $$\binom{n+k-1-|\Des(\Phi(\pi'))\setminus\{1\}|}{k-1-|\Des(\Phi(\pi'))\setminus\{1\}|}.$$
\end{theorem}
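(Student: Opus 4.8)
The plan is to use Theorem~\ref{suff_jell_erosebb_vegejel_nelkul} to convert the word count into a count of admissible letter-frequency vectors, and then to evaluate that count by a stars-and-bars argument.

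First I would invoke Theorem~\ref{suff_jell_erosebb_vegejel_nelkul}: a word $w\in\Sigma^n$ has $\pi$ as its suffix array if and only if its vector of letter multiplicities $(r_1,\dots,r_k)$ (with $r_i\ge 0$ and $\sum_i r_i=n$) satisfies $\Des(\Phi(\pi'))\subseteq\{1,1+r_1,\dots,1+r_1+\dots+r_{k-1}\}$, and for each such admissible vector there is exactly one realizing word. Since distinct words with suffix array $\pi$ necessarily have distinct multiplicity vectors, the quantity we seek equals the number of admissible vectors $(r_1,\dots,r_k)$.

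Next I would reparametrize by partial sums. Setting $s_0=0$, $s_j=r_1+\dots+r_j$, and $s_k=n$, the constraints $r_i\ge 0$ become $0=s_0\le s_1\le\dots\le s_{k-1}\le s_k=n$, so admissible vectors are in bijection with weakly increasing sequences $(s_1,\dots,s_{k-1})$ in $\{0,1,\dots,n\}$, i.e.\ with size-$(k-1)$ multisets over the $(n+1)$-element ground set $\{0,1,\dots,n\}$. The allowed descent set rewrites as $\{1\}\cup\{1+s_j : 1\le j\le k-1\}$. Since $\Des(\Phi(\pi'))\subseteq[1,n]$, the value $1$ is automatically covered by the literal $1$ (arising from $s_0=0$), while every descent in $D:=\Des(\Phi(\pi'))\setminus\{1\}\subseteq[2,n]$ must equal some $1+s_j$. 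Hence the condition is exactly that the multiset of $s_j$-values contain every element of $D-1:=\{d-1 : d\in D\}\subseteq[1,n-1]$ at least once. This is where the role of the set difference with $\{1\}$ in the statement becomes transparent, and it is correct regardless of whether $1$ is itself a descent.

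Finally I would count these multisets. Writing $d:=|D|$, the $d$ forced values are distinct and consume $d$ of the $k-1$ available slots (one copy each); the remaining $k-1-d$ slots may be filled freely by any of the $n+1$ ground elements with repetition. The number of such size-$(k-1-d)$ multisets is $\binom{(n+1)+(k-1-d)-1}{k-1-d}=\binom{n+k-1-d}{k-1-d}$, which is the claimed formula and which correctly gives $0$ when $d>k-1$ (no admissible vector exists). The main obstacle is the bookkeeping of the reparametrization step: translating "$r_i\ge 0$ permitted" into a weakly increasing (not strict) multiset model, and checking that the descent constraint collapses precisely to the requirement that $D-1$ be covered, so that forced and free elements separate cleanly for the count.
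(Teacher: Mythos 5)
Your proof is correct and follows essentially the same route as the paper: both reduce the word count to counting admissible Parikh vectors via Theorem~\ref{suff_jell_erosebb_vegejel_nelkul}, and then perform the same stars-and-bars count in which $|\Des(\Phi(\pi'))\setminus\{1\}|$ positions are forced and the remaining $k-1-|\Des(\Phi(\pi'))\setminus\{1\}|$ are free. Your partial-sum/multiset formulation is just a reparametrization of the paper's dots-and-separators picture (a separator after the $s_j$-th dot corresponds to the multiset element $s_j$), and if anything it makes the bijection the paper leaves to the reader slightly more explicit.
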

\begin{proof}
Theorem~\ref{suff_jell_erosebb_vegejel_nelkul} states that if $r_i\geq
0$ for $i=1\dots k$, $\sum_{i=1}^{k} r_i=n$, and
\begin{equation}\label{eq:felt}\Des(\Phi(\pi'))\subseteq
  \{1,1+r_1,1+r_1+r_2,\dots, 1+r_1+\dots +r_{k-1} \},
\end{equation}
then there is exactly one word $w$ with $r_i$ occurrences of $a_i$
that has $\pi$ as its suffix array. Therefore, we
need to count
the number of tuples $(r_1, \dots,r_k)$ (Parikh vectors) that satisfy
(\ref{eq:felt}) given a permutation $\pi'=(n+1)\,\pi(1)\dots\pi(n)$.
We represent a tuple $(r_1, \dots,r_k)$, $\sum^k_{i=1} r_i=n$, as a
sequence of $n$ dots divided into $k$ (possibly empty) groups separated by $k-1$ separators:
\begin{equation}\label{dots}
(r_1,\dots, r_k),  \sum r_i=n \quad \leftrightarrow \quad
\underbrace{\circ \dots\circ}_{r_1} |\underbrace{\circ \dots
  \circ}_{r_2} | \dots | \underbrace{\circ  \dots \circ}_{r_k}
\end{equation}
Clearly, this representation is a bijection.

If $i>1$ is a descent of $\Phi(\pi')$, there must be a separator
between the $(i-1)$-th and the $i$-th dots. This defines the placement
of $|\Des(\Phi(\pi'))\setminus\{1\}|$ separators. The remaining
$(k-1-|\Des(\Phi(\pi'))\setminus\{1\}|)$ separators can interleave the $n$
dots arbitrarily. This can be done in
$$\binom{n+k-1-|\Des(\Phi(\pi'))\setminus\{1\}|}{k-1-|\Des(\Phi(\pi'))\setminus\{1\}|}=\binom{n+k-1-|\Des(\Phi(\pi'))\setminus\{1\}|}{n}$$
ways. It is easy to get convinced that the construction provides a
bijection with the considered set of tuples. The result follows.
\end{proof}

Note that if $k-1<|\Des(\Phi(\pi'))\setminus\{1\}|$, there is no word on an
alphabet of size $k$ which has $\pi$ as its suffix array. This is
confirmed by Proposition~\ref{enumeration1}, as $\binom{m}{n}=0$ for
$m<n$.
The following proposition from \cite{Schurmann2008} answers question (ii).

\begin{theorem}[\cite{Schurmann2008}]
For a permutation $\pi\in\mathbf{S}_n$, let
$\pi'=(n+1)\,\pi(1)\dots\pi(n)$.
The number of words of length $n$ over an alphabet of size $k$ that
have at least one occurrence of each of the $k$ letters and have
$\pi$ as their suffix array
is $$\binom{n-1-|\Des(\Phi(\pi'))\setminus\{1\}|}{k-1-|\Des(\Phi(\pi'))\setminus\{1\}|}.$$
\end{theorem}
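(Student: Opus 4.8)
The plan is to mimic, essentially verbatim, the proof of Theorem~\ref{enumeration1}, the only new ingredient being the requirement that every letter actually occurs, i.e. that each $r_i \geq 1$. First I would invoke Theorem~\ref{suff_jell_erosebb_vegejel_nelkul} to reduce the problem to a purely combinatorial count: a word $w$ with Parikh vector $(r_1,\dots,r_k)$, $\sum_i r_i = n$, has $\pi$ as its suffix array exactly when $\Des(\Phi(\pi'))\subseteq\{1,1+r_1,\dots,1+r_1+\dots+r_{k-1}\}$, and for each admissible Parikh vector there is precisely one such word. Since a word uses every letter of the size-$k$ alphabet iff all $r_i\geq 1$, the quantity to compute is the number of tuples $(r_1,\dots,r_k)$ with every $r_i\geq 1$, $\sum_i r_i = n$, satisfying this descent condition. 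Throughout I would write $d := |\Des(\Phi(\pi'))\setminus\{1\}|$.

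Next I would reuse the dots-and-separators encoding from~(\ref{dots}): represent the tuple as $n$ dots split into $k$ groups by $k-1$ separators. The crucial difference with Theorem~\ref{enumeration1} is the effect of the positivity constraint: requiring every $r_i\geq 1$ means every group is nonempty, which translates exactly into the condition that the $k-1$ separators occupy $k-1$ \emph{distinct internal} gaps, one of the $n-1$ positions strictly between two consecutive dots, with no separator placed before the first dot or after the last dot. This is the replacement for the ``separators interleave the $n$ dots arbitrarily'' step in the proof of Theorem~\ref{enumeration1}, where empty groups were allowed.

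Then I would carry out the count exactly as before. A descent $i\in\Des(\Phi(\pi'))\setminus\{1\}$ forces a separator immediately after the $(i-1)$-th dot. Since such $i$ lies in $[2,n]$, we have $i-1\in[1,n-1]$, so the forced separator sits in a genuine internal gap, and distinct descents force separators into distinct gaps. This pins down $d$ of the separators in $d$ distinct internal gaps, and the remaining $k-1-d$ separators must be placed in distinct gaps among the remaining $n-1-d$ internal gaps. The number of ways to do this is $\binom{n-1-d}{k-1-d}$, which is the claimed formula, and this placement is a bijection onto the set of admissible positive Parikh vectors.

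The step requiring the most care --- though it is not really an obstacle --- is verifying that the translation of $r_i\geq 1$ into ``at most one separator per gap, none at the two ends'' is exact, and that the $d$ forced gaps are genuinely internal and pairwise distinct, so that the free separators range over precisely $n-1-d$ remaining gaps. Once this bookkeeping is confirmed the binomial count is immediate; the degenerate case $k-1<d$ needs no separate treatment, since $\binom{m}{\ell}=0$ for $m<\ell$ then gives the correct answer $0$, consistent with the fact that no alphabet of size $k$ realizes $\pi$ when the forced separators already exceed the available slots.
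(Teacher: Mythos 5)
Your proposal is correct and follows essentially the same route as the paper's own proof: both adapt the dots-and-separators argument of Theorem~\ref{enumeration1} by translating the constraint $r_i\geq 1$ into placing the $k-1$ separators in distinct internal gaps (none at the ends, no two adjacent), with the $|\Des(\Phi(\pi'))\setminus\{1\}|$ descent-forced separators pinned down and the rest chosen freely among the remaining $n-1-|\Des(\Phi(\pi'))\setminus\{1\}|$ gaps. The paper states this more tersely; your explicit check that the forced gaps are internal and pairwise distinct is a detail the paper leaves implicit.
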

\begin{proof}
We modify the proof of Proposition~\ref{enumeration1} to insure that
that each letter occurs at least once. We cannot have two adjacent
separators, and we cannot start or end with a separator.
We then have to distribute $k-1$ separators among $n-1$ possible
places between the circles. 
Like in the proof of Proposition~\ref{enumeration1}, the place of
$|\Des(\Phi(\pi'))\setminus\{1\}|$ separators is determined by $\pi'$, and
the remaining $(k-1-|\Des(\Phi(\pi'))\setminus\{1\}|)$ separators are
distributed among $(n-1-|\Des(\Phi(\pi'))\setminus\{1\}|)$ remaining
places. This yields the count of the Theorem.
%
%
\end{proof}

Finally, we give a proof for question (iii), based on the results of
Section~\ref{sect-charact}. Let $\Big\langle \begin{matrix} n
  \\ d \end{matrix} \Big\rangle$ denote the Eulerian number,
  i.e. the number of permutations of $[1,n]$ with exactly $d$
  descents.

\begin{theorem}[\cite{Schurmann2008}]
The number of permutations $\pi\in \mathbf{S}_n$ that are suffix arrays of a word $w \in \Sigma^n$ with $|\Sigma|=k$ is ${\displaystyle  \sum_{d=0}^{k-1}\Big\langle \begin{matrix} n \\ d \end{matrix} \Big\rangle}$.
\end{theorem}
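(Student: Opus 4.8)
The plan is to peel the count down to single-orbit permutations, and then to ordinary permutations counted by descents. Writing $\pi'=(n+1)\,\pi(1)\dots\pi(n)$ as before, Theorem~\ref{theorem_without_endline_symbol} says that $\pi\in\mathbf{S}_n$ is the suffix array of \emph{some} word in $\Sigma^n$ (with $|\Sigma|=k$) if and only if $|\Des(\Phi(\pi'))\setminus\{1\}|\le k-1$, and by Theorem~\ref{th:bijekcio_fivel} the map $\pi\mapsto\Phi(\pi')$ is a bijection from $\mathbf{S}_n$ onto $\mathbf{S}_{n+1}^c$. Hence the number we want equals
\[
\#\{\varphi\in\mathbf{S}_{n+1}^c : |\Des(\varphi)\setminus\{1\}|\le k-1\},
\]
and it suffices to prove that for every $d\ge0$,
\[
\#\{\varphi\in\mathbf{S}_{n+1}^c : |\Des(\varphi)\setminus\{1\}|=d\}=\Big\langle\begin{matrix}n\\ d\end{matrix}\Big\rangle ,
\]
since summing this over $d=0,\dots,k-1$ gives the claimed formula.

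Next I would recast $|\Des(\varphi)\setminus\{1\}|$ as an ordinary descent count. The descents of $\varphi\in\mathbf{S}_{n+1}$ are indices in $[1,n]$; discarding $1$, the remaining possible descent positions $2,\dots,n$ are exactly the internal adjacencies of the length-$n$ word $\varphi(2)\,\varphi(3)\cdots\varphi(n+1)$, whose letters are the $n$ distinct values $[1,n+1]\setminus\{\varphi(1)\}$. Standardizing this word (replacing its $i$-th smallest letter by $i$) yields a permutation $\widehat{\varphi}\in\mathbf{S}_n$ with $|\Des(\widehat{\varphi})|=|\Des(\varphi)\setminus\{1\}|$, because standardization preserves the relative order of entries and hence all descents.

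The heart of the proof is then to show that $\varphi\mapsto\widehat{\varphi}$ is a bijection from $\mathbf{S}_{n+1}^c$ onto $\mathbf{S}_n$. Granting this, the per-$d$ identity is immediate, as $\Big\langle\begin{matrix}n\\ d\end{matrix}\Big\rangle$ counts precisely the $\rho\in\mathbf{S}_n$ with $|\Des(\rho)|=d$. To invert the map one must recover $\varphi$ from $\rho=\widehat{\varphi}$: choose the missing value $c=\varphi(1)\in[1,n+1]$, inflate $\rho$ order-preservingly onto $[1,n+1]\setminus\{c\}$ to reconstruct $\varphi(2)\cdots\varphi(n+1)$, and prepend $c$. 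The map is a bijection exactly when, for each $\rho$, precisely one of the $n+1$ choices of $c$ produces a permutation with a single orbit, every other choice splitting off a proper invariant subset.

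I expect this uniqueness statement to be the main obstacle, since a priori several reinsertions could fail to be single cycles and one must rule that out. I would attack it either by a cycle-lemma–type argument — closing the reconstructed one-line word into a cyclic sequence and extracting the unique rooting that produces a single $(n+1)$-cycle — or by the explicit inverse furnished by Theorem~\ref{th:bijekcio_fivel}: following, for each candidate $c$, the successor chain it induces (equivalently iterating the reconstruction $\pi'(\varphi(i))=\pi'(i)+1$ from $\pi'(1)=n+1$) and checking directly that exactly one $c$ makes this chain exhaust $[1,n+1]$, the others closing up prematurely. Either route makes $\varphi\mapsto\widehat{\varphi}$ a bijection, after which the theorem follows by summing the Eulerian numbers $\Big\langle\begin{matrix}n\\ d\end{matrix}\Big\rangle$ over $d=0,\dots,k-1$; the global sanity check $\sum_{d=0}^{n-1}\Big\langle\begin{matrix}n\\ d\end{matrix}\Big\rangle=n!=|\mathbf{S}_{n+1}^c|$ confirms the overall bijection.
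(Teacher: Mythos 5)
Your first paragraph matches the paper exactly: both reduce the theorem, via Theorem~\ref{th:bijekcio_fivel}, to showing that the number of $\varphi\in\mathbf{S}_{n+1}^c$ with $|\Des(\varphi)\setminus\{1\}|=d$ is the Eulerian number $\big\langle \begin{smallmatrix} n \\ d \end{smallmatrix} \big\rangle$. The problem is the step you yourself flagged as the ``main obstacle'': it is not merely unproven, it is false. Your map (drop $\varphi(1)$, standardize the remaining word) is \emph{not} a bijection from $\mathbf{S}_{n+1}^c$ to $\mathbf{S}_n$. Concretely, take $n=3$ and $\rho=2\,1\,3$. The four reinsertions give $\varphi_1=1\,3\,2\,4$, $\varphi_2=2\,3\,1\,4$, $\varphi_3=3\,2\,1\,4$, $\varphi_4=4\,2\,1\,3$, and none of these is a $4$-cycle ($\varphi_4$, for instance, is the $3$-cycle $(1\,4\,3)$ together with the fixed point $2$). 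So $\rho=2\,1\,3$ has no preimage. Correspondingly, $\rho=3\,1\,2$ has two preimages in $\mathbf{S}_4^c$: both $2\,4\,1\,3$ (the cycle $1\to2\to4\to3\to1$) and $4\,3\,1\,2$ (the cycle $1\to4\to2\to3\to1$) standardize to $3\,1\,2$ after dropping the first entry. So the fibers have sizes $0$ and $2$; the equality $|\mathbf{S}_{n+1}^c|=n!=|\mathbf{S}_n|$ only forces the \emph{average} fiber size to be $1$, which is exactly what your sanity check verifies, and no more. Neither of your two suggested repair routes (a cycle-lemma argument, or tracing the successor chain for each candidate $c$) can succeed, because the uniqueness statement they are meant to establish is simply not true. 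Note that your statistic-preservation claim is fine, and the per-$d$ identity itself is true --- what fails is only the proposed bijective mechanism.

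The paper takes a different, induction-based route precisely here. It defines $P(n,d)=\#\{\varphi\in\mathbf{S}_{n+1}^c : |\Des(\varphi)\setminus\{1\}|=d\}$ and proves the Eulerian recurrence $P(n,d)=(d+1)P(n-1,d)+(n-d)P(n-1,d-1)$ by a generative construction that stays \emph{inside} the class of single-orbit permutations: for $\varphi\in\mathbf{S}_n^c$ and $s\in[2,n+1]$, the transform $T_s$ relabels via $\Aug_s$ and splits the arrow $(1,\varphi(1))$ into $(1,s)$ and $(s,\varphi(1))$, producing every element of $\mathbf{S}_{n+1}^c$ exactly once; the descent bookkeeping for the insertion position $s$ (one position per monotonic run creates no new descent, all others create exactly one) yields the coefficients $(d+1)$ and $(n-d)$. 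If you want to salvage a bijective proof in your style, you would need a map to $\mathbf{S}_n$ that uses the cycle structure globally (in the spirit of Foata-type correspondences), not the positional truncation you propose; as it stands, the argument has a fatal gap.
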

\begin{proof}
According to Theorem~\ref{th:bijekcio_fivel}, there is a bijection
between the suffix arrays of words $w\in\Sigma^n$
and the permutations $\varphi\in\mathbf{S}_{n+1}^c$ such that
$|\Des(\varphi)\setminus\{1\}|\leq k-1$. We then have to count the
number of such permutations.
Let $P(n,d)$ denote the number of permutations $\varphi\in
\mathbf{S}_{n+1}^c$ with $|\Des(\varphi)\setminus\{1\}|= d$.
To prove the theorem, we show that
$P(n,d)$ is equal to the Eulerian number
$\Big\langle \begin{matrix} n \\ d \end{matrix} \Big\rangle$.

%
The proof is by induction on $n$.
Trivially, $P(1,0)\!=\!1\!=\!\big\langle \begin{smallmatrix}\! 1\!
  \\ \! 0 \!\end{smallmatrix} \big\rangle\ $ (the only good
permutation is $\pi=2\,1$), and
$P(1,d)\!=\!0\!=\!\big\langle \begin{smallmatrix}\! 1\! \\ \!
  d\! \end{smallmatrix} \big\rangle\ $ when $d\geq 1$.
We now show that $P(n,d)=(d+1)P(n-1,d)+(n-d)P(n-1,d-1)$,
thereby proving that
$P(n,d)=\Big\langle \begin{matrix} n \\ d \end{matrix}
\Big\rangle$.
For the inductive step, we describe a generative procedure for the
considered permutations.
Consider $\varphi\in\mathbf{S}_{n}^c$ and
let $s\in[2,n+1]$. Consider the mapping
$\Aug_s:[1,n]\rightarrow [1,n+1]$ defined by
$$
\Aug_s(i)=
\begin{cases}
i &\textrm{ if } i<s, \\
i+1 & \textrm{ if } i\geq s \text{.}
\end{cases}
$$
Observe that $\Aug_s\circ\varphi\circ\Aug_s^{-1}$ is bijective on the
set $\{1,\dotsc ,s-1, s+1, \dotsc ,n\}$ and has one orbit.
Now consider the transform
$T_s:\mathbf{S}_{n}^c\rightarrow\mathbf{S}_{n+1}^c$ defined by
$$
T_s(\varphi)(i)=
\begin{cases}
\Aug_s\circ\varphi\circ\Aug_s^{-1}(i) &\textrm{ if }i\in[1,n+1]\setminus\{1,s\}, \\
s &\textrm{ if }i=1, \\
\Aug_s\circ\varphi\circ\Aug_s^{-1}(1)=\varphi(1) & \textrm{ if } i=s \text{.}
\end{cases}
$$
It is straightforward to check that $T_s(\varphi)\in\mathbf{S}^c_{n+1}$,
\emph{i.e.}, $T_s(\varphi)\in\mathbf{S}_{n+1}$ and it has one orbit.
The
construction is illustrated in Figure~\ref{figure1}.

\begin{figure}[t]
    \begin{tikzpicture}[scale=1.,label distance=5mm,>=latex]%
      \tikzstyle{vertex}=[draw,circle,thick,inner sep=.75mm],
      \tikzstyle{arc}=[->,thick,rounded corners],
      \begin{scope}
        \foreach \x in {1,2,3,4}
        \filldraw (\x,0) node (v\x) [vertex] {$\x$};
        \draw (2.5,-1.5) node {$\varphi$};
        \draw[arc]
        (v1) -- +(0.1,0.7) -- +(1.9,0.7) -- (v3);
        \draw[arc]
        (v3) -- +(0.1,0.7) -- +(0.9,0.7) -- (v4);
        \draw[arc]
        (v4) -- +(-0.1,-0.7) -- +(-1.9,-0.7) -- (v2);
        \draw[arc]
        (v2) -- +(-0.1,-0.7) -- +(-1.0,-0.7) -- (v1);
      \end{scope}
      \begin{scope}[xshift=4cm]
        \foreach \x in {1,2,4,5}
        \filldraw (\x,0) node (v\x) [vertex] {$\x$};
        \draw (3,-1.5) node {$\Aug_s\circ \,\varphi \circ \Aug_s^{-1}$};
        \draw[arc]
        (v1) -- +(0.1,0.7) -- +(2.9,0.7) -- (v4);
        \draw[arc]
        (v4) -- +(0.1,0.7) -- +(0.9,0.7) -- (v5);
        \draw[arc]
        (v5) -- +(-0.1,-0.7) -- +(-2.9,-0.7) -- (v2);
        \draw[arc]
        (v2) -- +(-0.1,-0.7) -- +(-1.0,-0.7) -- (v1);
      \end{scope}
      \begin{scope}[xshift=9cm]
        \foreach \x in {1,2,3,4,5}
        \filldraw (\x,0) node (v\x) [vertex] {$\x$};
        \draw (3,-1.5) node {$T_s(\varphi)$};
        \draw[arc,ultra thick]
        (v1) -- +(0.1,0.7) -- +(1.9,0.7) -- (v3);
        \draw[arc,ultra thick]
        (v3) -- +(0.1,0.7) -- +(0.9,0.7) -- (v4);
        \draw[arc]
        (v4) -- +(0.1,0.7) -- +(0.9,0.7) -- (v5);
        \draw[arc]
        (v5) -- +(-0.1,-0.7) -- +(-2.9,-0.7) -- (v2);
        \draw[arc]
        (v2) -- +(-0.1,-0.7) -- +(-1.0,-0.7) -- (v1);
      \end{scope}
    \end{tikzpicture}
    \caption{\label{figure1}%
      Illustration of $T_s(\varphi)$ with $\varphi=3142$ and
      $s=3$. Informally, $\Aug_s\circ\varphi\circ\Aug_s^{-1}$ ``increments by
      one'' all nodes $s,\ldots,n$. Then $T_s(\varphi)$ ``splits'' the
      mapping $(1,\varphi(1))$ into $(1,s)$ and $(s,\varphi(1))$.}
\end{figure}
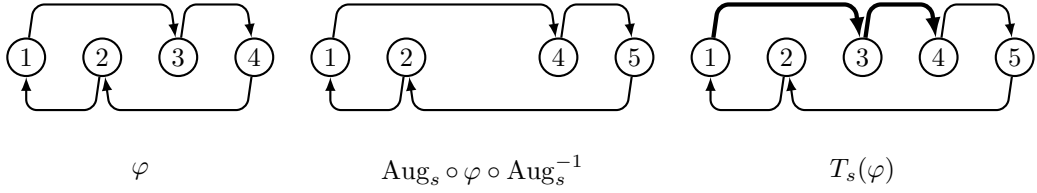

Furthermore, if $r,s\in[2,n+1]$ and $\varphi,\psi\in \mathbf{S}_n^c$,
where $r\neq s$ or $\varphi \neq \psi$, then $T_s(\varphi)\neq
T_r(\psi)$.
%
As there are $(n-1)!$ permutations in $\mathbf{S}_n^c$, we get
$n\cdot(n-1)!=n!$ different permutations $T_s(\varphi)$ for
$s\in[2,n+1]$ and $\varphi\in \mathbf{S}_n^c$. Therefore,
$\mathbf{S}_{n+1}^c=\{T_s(\varphi)\ :\ s\in[2,n+1],\ \varphi\in
\mathbf{S}_n^c\}$.

Now, we examine how the transform $T_s(\varphi)$ affects the number of
descents of $\varphi$.
For $i\in [2,n-1]\setminus\{s-1\}$,
$T_s(\varphi)(\Aug_s(i))=\Aug_s(\varphi(i))$ and
$T_s(\varphi)(\Aug_s(i)+1)=T_s(\varphi)(\Aug_s(i+1))=\Aug_s(\varphi(i+1))$. Therefore
for $i\in [2,n-1]\setminus\{s-1\}$
$$
\Aug_s(i)\in\Des(T_s(\varphi))
\;\Leftrightarrow\;
i\in\Des(Aug_s(\varphi))
\;\Leftrightarrow\;
i\in\Des(\varphi),
$$
where the second equivalence follows from the monotonicity of $\Aug_s$.
Thus, $\Aug_s$ gives a one-to-one correspondence between
$\Des(\varphi)\setminus\{1,s-1\}$ and
$\Des(T_s(\varphi))\setminus\{1,s-1,s\})$. It remains to analyze
values $s-1$ and $s$.
We have
$T_s(\varphi)(s+1)=Aug_s(\varphi)(s)<Aug_s(\varphi)(s-1)=T_s(\varphi)(s-1)$
if and only if $s-1\in \Des(\varphi)$. In this case, $s-1$ or $s$ is a
descent of $T_s(\varphi)$.
The insertion of $T_s(\varphi)(s)=\varphi(1)$ may or may not create a new descent.
For a given $\varphi\in\mathbf{S}_n^c$, in each monotonic run of
$\varphi$ on indices $\{2,\dots, n\}$, there is exactly one position
where we can place $\varphi(1)$ without creating a new descent,
otherwise we create exactly one new descent.

How many $T_s(\varphi)$ can we have with $|\Des(T_s(\varphi))\setminus\{1\}|=d$?
For each $\varphi\in\mathbf{S}_n^c$ with
$|\Des(\varphi)\setminus\{1\}|=d$, we have $(d+1)$ possibilities to
choose $s$ ($\varphi$ has $d+1$ monotonic runs on $\{2,\dots, n\}$).
For each $\varphi\in\mathbf{S}_n^c$ with
$|\Des(\varphi)\setminus\{1\}|=d-1$, we have $(n-d)$ possibilities to
choose $s$.
These permutations are all different as $T_s(\varphi)\neq T_r(\psi)$
if $s\neq r$ or $\varphi \neq \psi$. There is no other way to get a
permutation $\psi\in\mathbf{S}_{n+1}^c$ with
$|\Des(\psi)\setminus\{1\}|=d$. We conclude that
$P(n,d)=(d+1)P(n-1,d)+(n-d)P(n-1,d-1)$. This proves the Theorem.
\end{proof}


\bibliographystyle{elsarticle-num-names}

\end{document}